\newcounter{MYtempeqncnt}
\newcommand{\expect}[1]{{\mathbb{E}\left[{#1}\right]}}
\newcommand{\cexpect}[2]{{\mathbb{E}_{#2}\left[{#1}\right]}}
\newcommand{\pr}{{\mathbb{P}}}
\newcommand{\mgf}{\mathcal{L}}
\newcommand{\pmf}{{p}}
\newcommand{\set}[1]{{\mathcal{#1}}}
\newcommand{\ple}[1]{{\alpha_{#1}}}
\newcommand{\nple}[1]{{\hat{\alpha}_{#1}}}
\newcommand{\power}[1]{\mathrm{P}_{#1}}
\newcommand{\res}{\mathrm{W}}
\newcommand{\npower}[1]{\mathrm{\hat{P}}_{#1}}
\newcommand{\bias}[1]{\mathrm{B}_{#1}}
\newcommand{\noisepower}{\sigma^2}
\newcommand{\nbias}[1]{\mathrm{\hat{B}}_{#1}}
\newcommand{\dnsty}[1]{{\lambda_{#1}}}
\newcommand{\userdnsty}{\lambda_u}
\newcommand{\R}{{\mathbb{R}}}
\newcommand{\SINRthresh}{\tau} 
\newcommand{\Z}{\mathsf{Z}} 
\newcommand{\Q}{\mathsf{Q}}
\newcommand{\bkhl}[1]{\mathrm{O}_{#1}}
\newcommand{\RATEthresh}{\rho}
\newcommand{\nRATEthresh}{\hat{\rho}}
\newcommand{\uRATEthresh}{t}
\newcommand{\uth}[1]{#1^\text{th}} 
\newcommand{\uset}{\mathcal{U}}
\newcommand{\opueindex}{B}
\newcommand{\ipueindex}{\bar{B}}
\newcommand{\mue}{\mathcal{U}_1}
\newcommand{\pue}{\mathcal{U}_2}
\newcommand{\opue}{\mathcal{U}_{\opueindex}}
\newcommand{\ipue}{\mathcal{U}_{\ipueindex}}
\newcommand{\tmap}[1]{J(#1)}
\newcommand{\SINR}{\mathtt{SINR}}
\newcommand{\SNR}{\mathtt{SNR}}
\newcommand{\assocr}{\mathcal{C}}
\newcommand{\rate}[1]{R_{#1}}
\newcommand{\ndist}[1]{z_{#1}}
\newcommand{\ndistns}{z}
\newcommand{\NDIST}[1]{Z_{#1}}
\newcommand{\NAP}[1]{X_{#1}^*}
\newcommand{\ndistnsc}{y}
\newcommand{\NDISTc}[1]{Y_{#1}}
\newcommand{\PPP}[1]{\Phi_{#1}}
\newcommand{\PPPu}{\Phi_{u}}
\newcommand{\tierPPP}[1]{{\Phi}_{#1}}
\newcommand{\load}[1]{{N}_{#1}} 
\newcommand{\oload}[1]{{N}_{o,#1}} 
\newcommand{\avload}[1]{\bar{N}_{#1}} 
\newcommand{\chanl}{H}
\newcommand{\area}{C}
\newcommand{\areans}{c}
\newcommand{\indic}{1\hspace{-2mm}{1}}
\newcommand{\pcov}{\mathcal{S}}
\newcommand{\psys}{\mathcal{R}}
\newcommand{\rcov}{\mathcal{R}}
\newcommand{\avpsys}{\bar{\mathcal{R}}}
\newcommand{\passoc}{\mathcal{A}}
\newcommand{\pa}[1]{p_{a#1}} 
\newcommand{\af}{\eta} 
\newcommand{\ap}{\delta} 
\newcommand{\effres}[1]{\gamma_{#1}}
\newtheorem{thm}{{\bf Theorem}}
\newtheorem{cor}{Corollary}
\newtheorem{rem}{Remark}
\newtheorem{lem}{Lemma}
\theoremstyle{definition}
\newtheorem{definition}{Definition}
\begin{document}
\title{Joint Resource Partitioning  and Offloading in Heterogeneous Cellular Networks}

\author{Sarabjot Singh and Jeffrey G. Andrews \thanks{This work has been supported by the Intel-Cisco Video Aware Wireless Networks (VAWN) Program and NSF grant CIF-1016649. A part of this paper
is accepted for presentation at IEEE Globecom 2013 in Altlanta, USA \cite{SinAndGC13}. 

 The authors are with Dept. of Electrical and Computer Engineering at the University of Texas, Austin  (email:  sarabjot@utexas.edu   and jandrews@ece.utexas.edu).}}
\maketitle

\begin{abstract}
In  heterogeneous cellular networks (HCNs), it is desirable to offload mobile users to small cells, which are typically significantly less congested than the macrocells.  To achieve sufficient load balancing, the offloaded users often have much lower SINR than they would on the macrocell. This SINR degradation can be partially alleviated through interference avoidance, for example time or frequency resource partitioning, whereby the macrocell turns off in some fraction of such resources.  Naturally, the optimal offloading strategy is tightly coupled with resource partitioning; the optimal amount of which in turn depends on how many users have been offloaded.  In this paper, we propose a general and tractable framework for modeling and analyzing joint resource partitioning and offloading in a two-tier cellular network. With it, we are able to derive the downlink rate distribution over the entire network,  and an optimal strategy for joint resource partitioning and offloading. We show that load balancing, by itself, is insufficient, and  resource partitioning is required  in conjunction with offloading to improve the rate of cell edge users  in co-channel heterogeneous networks.
\end{abstract}

\section{Introduction}
The exponential growth in mobile traffic primarily driven by mobile video  has led the drive to deploy low power base stations (BSs) both in  licensed  and unlicensed spectrum  in order to complement the existing macro cellular architecture. Increased cell density increases the area spectral efficiency of the network \cite{alouini1999area} and is the key factor for the very large required capacity boost. Such a heterogeneous network (HetNet) consists of macro BSs coexisting with small cells formed by co-channel low power base stations like micro, pico and femto BSs, as well as unlicensed band  WiFi access points (APs) \cite{qcom_hetnet_wmag,ghosh2012heterogeneous}. 

The ``natural'' association/coverage areas of the low power APs tend to be much smaller than those of the macro BSs, and hence the fraction of user population that is offloaded to small cells may often be limited, resulting in insufficient relief to the congested  macro tier. This  user load disparity not only leads to suboptimal rate distribution across the network, but the lightly loaded small cells may also  lead to  bursty interference causing QoS degradation \cite{singh2012interference,madan2010cell}. 
 One practical technique  for proactively offloading users to small cells  is called cell range expansion (CRE) \cite{qcom_hetnet_wmag} wherein the users are offloaded through an association bias. A positive association bias implies that a user would be offloaded to a small cell as soon as the received power difference from the macro and  small cell drops  below the bias value,   which ``artificially" expands the association areas of small cells\footnote{Access point (AP), base station (BS), cell are used interchangeably in the paper.}.  Though experiencing reduced congestion, in co-channel deployments such offloaded users also have degraded signal-to-interference-plus-noise-ratio ($\SINR$),  as the strongest AP (in  terms of received power)  now  contributes to interference. Therefore, the gains from balancing load could be negated if suitable interference avoidance  strategies are not adopted in conjunction with cell range  expansion particularly  in co-channel deployments \cite{madan2010cell}. One such strategy  of interference avoidance  is resource partitioning \cite{qcom_hetnet_wmag,lopez2011enhanced}, wherein the transmission of  macro tier  is periodically muted on certain fraction of  radio resources (also called \textit{almost blank  subframes} in $3$GPP LTE). The  offloaded users can then be scheduled in these resources by the small cells leading to their protection from co-channel macro tier interference.
 
\subsection{Motivation and Related Work}
It  has been established that without proactive offloading and resource partitioning only limited performance gains can be achieved from the deployment of small cells  \cite{vajapeyam2011downlink,WangPed12,3ggp_r1_103824,barbieri2012coordinated,wang2012sensitivity}. These techniques  are strongly coupled and directly influence the rate of users, but the fundamentals of jointly optimizing  offloading and resource partitioning are not well understood.  For example,   an excessively large association bias can  cause the small cells to be overly congested with users of poor $\SINR$,  which requires excessive muting by  the macro cell to improve the rate of offloaded users. 
Earlier simulation based studies \cite{barbieri2012coordinated,wang2012sensitivity} confirmed this insight and  showed that excessive biasing and resource partitioning  can actually degrade the overall rate distribution, whereas the choice of optimal parameters can yield about 2-3x gain in the rate coverage (fraction of user population receiving rate greater than a threshold). 
Although encouraging,  a general tractable framework for  characterizing the optimal operating regions for resource partitioning and offloading is still an open problem.  
The work in this  paper is aimed to bridge this gap.

A ``straightforward" approach of finding the optimal strategy is to search over all possible user-AP associations and time/frequency allocations for each network configuration. Besides being computationally daunting, this approach is unlikely to lead to insight into the role of  key parameters on  system performance. Another methodology is a probabilistic  analytical approach, where the network configuration is assumed random and following a certain distribution. This has the advantage of leading to  insights on the impact of various system parameters on the average performance through tractable expressions. Analytical approaches for  biasing and interference coordination were studied in \cite{mukherjee2011effects, novlan12analytical,lopez2012expanded}, but downlink rate (one of the key metrics) was not investigated. Optimal bias and almost blank subframes were prescribed in \cite{mukherjee2011effects} based on average per user spectral efficiency. 
 A related $\SINR$ and mean throughput  based analysis for resource partitioning was done in \cite{novlan12analytical} and \cite{CiernyABSHetNets13} respectively, but offloading was not captured.  The choice of optimal  range expansion biases in \cite{lopez2012expanded} was not based on rate distribution.  In this paper, we use the metric of \emph{rate  coverage}, which captures the effect of both $\SINR$ and load distribution across the network. Semi-analytical approaches in \cite{guvenc2011capacity,ye2012user} showed, through simulations, that there exists an optimal association  bias  for fifth percentile and median  rate which is  confirmed in this paper through our analysis.  Also, to the best of our knowledge,  none of the mentioned earlier works considered the impact of backhaul capacities on offloading, which is another contribution of the presented work.

\subsection{Approach and Contributions}
We propose a general and tractable  framework to analyze joint resource partitioning and offloading in a two-tier cellular network in Section  \ref{sec:sysmodel}. The proposed modeling can be extended to a multiple tier setting as discussed in Sec. \ref{sec:multier}. Each  tier of base stations  is modeled as  an independent  Poisson point process (PPP), where each tier differs in transmit power, path loss exponent, and  deployment density. The  mobile user locations are modeled as an independent PPP and user association is assumed to be based on biased received power.  On all channels,  i.i.d. Rayleigh fading is assumed.
Similar tractable frameworks were used for deriving $\SINR$ distribution in HCNs in \cite{josanxiaand12,dhiganbacand12,mukh12}.  The empirical validation in \cite{andganbac11} and theoretical validation in \cite{BlaKarKee12} for heavily shadowed cellular networks have strengthened the case of modeling macro cellular networks using a  PPP.  Due to the formation of random association/coverage areas in such network models,  load distribution is difficult  to characterize.  An approximate load and rate distribution was derived for  multiple radio access technology (RAT) HetNets in \cite{SinDhiAnd13}.

Based on  our proposed approach, the contributions of the paper can be divided into two categories:\\
\textbf{Analysis.} 
The rate complementary cumulative distribution function (CCDF) in a two-tier co-channel  heterogeneous network is derived as a function of the cell range expansion/offloading and resource partitioning parameters in Section \ref{sec:ratedist}. Rate coverage at a particular rate threshold is the rate  CCDF value at that threshold.  The derived  rate distribution is then modified to incorporate a network setting where APs are equipped with limited capacity backhaul.   Under certain plausible scenarios, the derived expressions are in closed form.\\
\textbf{Design Guidelines.}
The theoretical results lead to joint resource partitioning and offloading insights for optimal $\SINR$  and rate coverage in Section \ref{sec:optbias}. In particular, we show the following: 
\begin{itemize}
\item With no resource partitioning, optimal association bias  for rate coverage is independent of the density of the small cells. In contrast, offloading is shown to be strictly suboptimal for $\SINR$ in this case.
\item With resource partitioning, optimal association bias decreases with increasing density of the small cells.
\item In both of the above scenarios, the optimal fraction of users offloaded, however, increases with increasing density of small cells.
\item With decrease in backhaul capacity/bandwidth the optimal association bias for the corresponding tier always decreases. However, in contrast to the trend in the ``infinite"\footnote{Infinite bandwidth implies sufficiently large so as not to affect the effective end-to-end rate.} backhaul scenario, the optimal association bias may increase with increasing small cell  density.
\end{itemize}
The paper is concluded  in Section \ref{sec:conclusion} and future work  is suggested.

\section{Downlink System Model and Key Metrics}\label{sec:sysmodel}
In this paper, the wireless network   consists of  a two-tier deployment of APs.  The location of the APs of  $k^{\mathrm{th}}$ tier ($k=1,2$)  is  modeled as a two-dimensional homogeneous PPP $\PPP{k}$  of density (intensity) $\dnsty{k}$.  Without any loss of generality, let the  macro tier be tier $1$ and  the small cells constitute tier $2$. The locations of users (denoted by $\uset$) in the network  are modeled as another  independent homogeneous PPP $\PPPu$ with density $\userdnsty$. 
Every AP of $\uth{k}$ tier transmits with the same transmit power $\power{k}$ over bandwidth $\res$. The downlink desired and interference signals  from an AP of tier-$k$ are assumed to experience path loss with a path loss exponent $\ple{k}$. A user receives a power  $\power{k} \chanl_x x^{-\ple{k}}$ from an AP  of $\uth{k}$ tier at a distance $x$, where $\chanl_x$ is the random channel power gain.  The random channel gains are  assumed to be  Rayleigh distributed with average unit power, i.e., $\chanl_x \sim \exp(1)$.  General fading distributions can be considered at some loss of tractability \cite{BacBlaMuh09}. The noise is assumed additive with power $\noisepower$.  The notations used in this paper  are summarized in Table \ref{table:notationtable}.

\begin{table}
	\centering
\caption{Notation Summary}
	\label{table:notationtable}
  \begin{tabulary}{\columnwidth}{ |c | C | }
    \hline
    \textbf{Notation} & \textbf{Description} \\ \hline
$\PPP{k};\PPPu$  & PPP of   APs of $\uth{k}$ tier;  PPP of  mobile users  \\ \hline
$\dnsty{k};  \userdnsty$ & Density of  APs of $\uth{k}$ tier; density of mobile users \\ \hline
$\power{k}; \npower{k}$ & Transmit power of APs of $\uth{k}$ tier; normalized transmit power of APs of $\uth{k}$ tier \\\hline
$\bias{k}; \nbias{k}$ & Association bias for $\uth{k}$ tier; normalized association bias  for $\uth{k}$ tier.\\\hline
$\ple{k}; \nple{k}$ & Path loss exponent of $k^{\text{th}}$ tier; normalized path loss exponent of $\uth{k}$ tier\\\hline
$\res{}; \bkhl{k}$ & Air interface bandwidth  at an AP for resource allocation; backhaul bandwidth at an AP of $\uth{k}$ tier  \\\hline
$\set{U}_l$ & Macro cell users $l=1$, small cell users (non-range expanded) $l = \ipueindex$, offloaded users $l=\opueindex$\\\hline
$\af; \effres{l}$ & Resource partitioning fraction; inverse of the effective fraction of resources available for users in $\set{U}_l$\\\hline
$\tmap{l}$ & Map from user set index to serving tier index, $\tmap{1}=1$, $\tmap{\ipueindex}=\tmap{\opueindex}=2$\\\hline
$\noisepower$ & Thermal noise power \\\hline
$\passoc_l $& Association probability of a typical user to $\set{U}_l$\\\hline
$\rcov; \pcov; \RATEthresh$ & Rate coverage; $\SINR$ coverage; rate threshold\\\hline
$\load{l}; \pmf_l(n)$ & Load at tagged AP of $u \in \uset_l$; PMF of load $\pmf_l(n)=\pr(\load{l}=n)$\\\hline
$ \NDIST{k}; \NDISTc{l}$ & Distance of the nearest AP in $\uth{k}$ tier; distance of the tagged AP conditioned on $u\in \uset_l$\\\hline
$ \assocr_{x_k}; \area_k$ & Association region; area of an AP of tier $k$\\\hline
\end{tabulary}
\end{table}

\subsection {User Association}
The analysis in this paper is done for a \textit{typical} user $u$ located at the origin. This is allowed by Slivnyak's theorem \cite{mecke_book}, which states that the properties observed by a typical\footnote{The term typical and random   are interchangeably used in this paper.} point of a PPP  $\Phi$  is same as those observed by a node at origin  in the process $\Phi\cup\{0\}$. Let $\NDIST{k}$ denote the distance of the typical user from the nearest AP of $\uth{k}$ tier. It is assumed that each user uses biased received power association in  which it associates to the nearest AP of tier $j$ if
\begin{align}\label{eq:association}
j &= \arg \max_{k\in\left\{1,2\right\}} \power{k}\bias{k}\NDIST{k}^{-\ple{k}}, 
\end{align}
where $\bias{k}$ is the association bias  for $\uth{k}$ tier. Increasing  association bias leads to the range expansion for the corresponding APs and therefore  offloading of more users to the corresponding tier.  For clarity, we define the \textit{normalized} value of a parameter  of a tier as its value divided by the value it takes for the serving tier. Thus, 
\[ \npower{k} \triangleq \frac{\power{k}}{\power{j}},\,\, \nbias{k} \triangleq \frac{\bias{k}}{\bias{j}}\,\, \text{, and }\,\, \nple{k} \triangleq \frac{\ple{k}}{\ple{j}},  \]
are respectively the normalized transmit power, association bias, and path loss exponent of tier $k$ conditioned on the user being associated with tier $j$. 
In this paper, association bias for  tier 1 (macro tier) is assumed to be unity ($\bias{1}= 0$ dB) and that of tier 2 is simply denoted by $\bias{}$, where $\bias{} \geq 0$ dB. 
In the given setup, a  user $u\in \uset$ can lie in the following three disjoint sets:
\begin{equation}\label{eq:setdef}
u \in \begin{cases}
 \mue  \text{ if $j=1$, $\power{1}\NDIST{1}^{-\ple{1}}\geq\power{2}\bias{}\NDIST{2}^{-\ple{2}} $}\\
 \ipue  \text{ if $j=2$ and   $\power{2}\NDIST{2}^{-\ple{2}}>\power{1}\NDIST{1}^{-\ple{1}}$}\\
 \opue \text{ if $j=2$ and $\power{2}\NDIST{2}^{-\ple{2}}\leq\power{1}\NDIST{1}^{-\ple{1}}<\power{2}\bias{}\NDIST{2}^{-\ple{2}}$},
 \end{cases}
\end{equation}
where $\mue\cup\opue\cup\ipue = \set{U}$ clearly. The set $\mue$ is the set of macro cell users and the set $\ipue$ is the set of unbiased small cell users. Thus, the set $\ipue$ is independent of the association bias. The users offloaded from macro cells to small cells due to cell range expansion constitute $\opue$ and  are  referred to as the \textit{range expanded users}. All the users associated with small cells are $\pue \triangleq \ipue \cup \opue$. We define a mapping  $J:\{1,\ipueindex,\opueindex\}\to \{1,2\}$ from user set index to serving tier index. Thus, from (\ref{eq:setdef}), $\tmap{1}=1$, $\tmap{\opueindex}=\tmap{\ipueindex}=2$.

The biased received power based association model described above  leads to the formation of association/coverage areas in the Euclidean plane as described below.
\begin{definition}
\textbf{Association Region}: The region of the Euclidean plane in which all users are served by an  AP is called its association region.
Mathematically, the association region of an AP of tier $j$ located at $x$ is
\begin{equation}\label{eq:assocr}
\assocr_{x_{j}}= \bigg\{ y \in \R^2: \|y-x\| \leq \left(\frac{\power{j}\bias{j}}{\power{k}\bias{k}}\right)^{1/\ple{j}}\|y-\NAP{k}(y)\|^{\nple{k}}  \forall\,\, k \bigg\},
\end{equation}
where $\NAP{k}(y) = \arg \min\limits_{x \in \PPP{k}}\|y-x\|$.
\end{definition}
The random tessellation formed by the collection $\{\assocr_{x_{j}}\}$ of association regions is  a general case of the  multiplicatively weighted Voronoi \cite[Chapter~3]{Okabe00}, which results by using the presented model  with equal path loss exponents.
\begin{figure*}
  \centering
\subfloat[Active macro tier ]
{\label{fig:unmute}\includegraphics[width=\columnwidth]{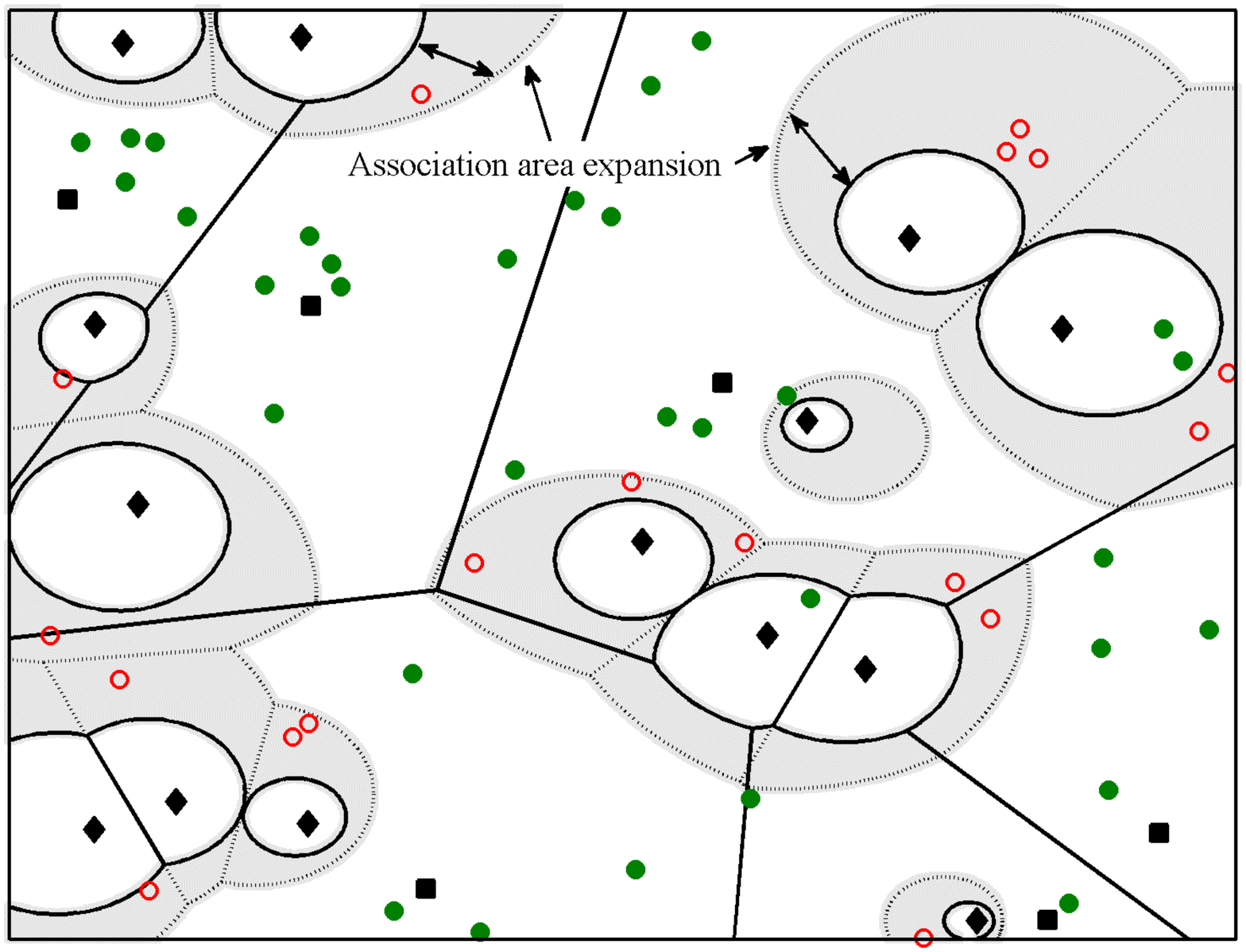}}
\subfloat[Muted macro tier]{\label{fig:mute}\includegraphics[width=\columnwidth]{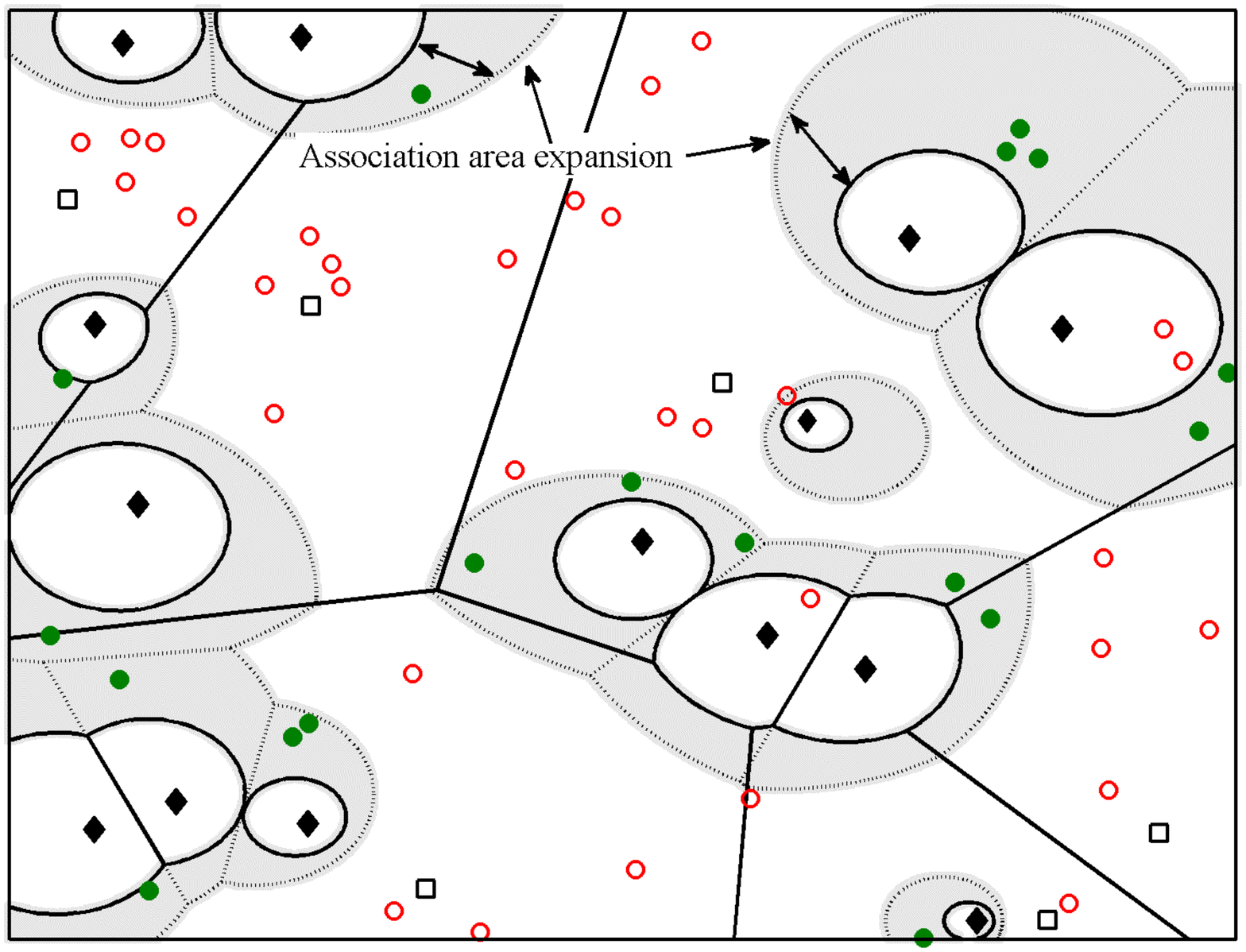}}
\caption{\small A filled marker is used for a  node  engaged in active transmission (BS) or reception (user). (a) The macro cells (filled squares)  serve the macro users $\mue$ and small cells (filled  diamonds) serve the non-range expanded users ($\ipue$) (filled circles). (b) The macro cells (hollow square) are  muted while the small cells (filled diamonds) serve the range expanded users $\opue$ (filled circles  in the shaded region).}
 \label{fig:icic}
\end{figure*}
\normalsize

\subsection{Resource Partitioning}
A resource partitioning approach is considered in which the macro cell shuts its transmission on certain fraction of time/frequency resources  and the small cell schedules the range expanded users on the corresponding resources, which  protects them  from macro cell interference.  
\begin{definition}$\boldsymbol\af$:
The resource partitioning fraction $\af$ is  the   fraction of resources on which the macro cell is inactive,  where $0 < \af <1$.  \end{definition} Thus, with resource partitioning $1-\af$ fraction of the resources at macro cell are allocated to users in $\mue$ and those at small cell are allocated to users in $\ipue$. The fraction $\af$ of the resources  in which the macro cell shuts down the transmission, the small cells schedule the range expanded users, i.e., $\opue$. Let $\effres{l}$ denote the inverse of the effective fraction of resources available for users in $\uset_l$. Then, $\effres{l} = 1/(1-\af)$ for $l \in \left\{1, \ipueindex\right\}$ and $\effres{l} = 1/\af$ for $l=\opueindex$.  The  operation  of  range expansion and  resource partitioning in a two-tier setup  is further elucidated in Fig. \ref{fig:icic}. In these plots, the  power ratio  is assumed to be $\frac{\power{1}}{\power{2}}=20$ dB and $\bias{}=10$ dB.

As a result of resource partitioning  ($0<\af<1$), the $\SINR$ of a typical user $u$, when it belongs to $\set{U}_l$, is 
\begin{equation}\label{eq:sinrdef}
\SINR = \indic\left(l\in\{1,\ipueindex\}\right)\frac{\power{\tmap{l}}\chanl_{\ndistnsc} \ndistnsc^{-\ple{\tmap{l}}}}{\sum_{k=1}^2 I_{y,k} + \noisepower}+ \indic(l=\opueindex)\frac{\power{2}\chanl_{\ndistnsc} \ndistnsc^{-\ple{2}}}{I_{y,2} + \noisepower}, 
\end{equation}
where $\indic(A)$ denotes the indicator of the event $A$, $\chanl_\ndistnsc$ is the channel power gain from the tagged  AP $s_l$ (AP serving the typical user) at a distance  $\ndistnsc$, $I_{y,k}$ denotes the interference from the $\uth{k}$ tier. The interference power from $\uth{k}$ tier is 
\begin{equation}\label{eq:interference}
I_{y,k} = \power{k}\sum_{x\in \tierPPP{k} \setminus s_l} \chanl_{x} x^{-\ple{k}}.
\end{equation}
In this paper, all APs of a tier are assumed to be active, when the corresponding tier is active. However, if each AP of $\uth{k}$ tier is independently active with a  probability $\pa{k}$, the submission in (\ref{eq:interference}) can then be treated as that over a thinned PPP of density $\dnsty{k}\pa{k}$.

 Let $\set{U}_s$ denote the set of users associated with the tagged AP. If the tagged AP belongs to macro tier, then $\load{1}=|\set{U}_s\cap\mue|$ denotes the total number of users (or \textit{load} henceforth) sharing the available  $1-\af$ fraction of the resources. Otherwise, if the tagged AP belongs to tier 2, then the load is $\load{2}=|\set{U}_s\cap\set{U}_2|$ of which $\load{\bar{B}}=|\set{U}_s\cap\ipue| $ users share the  $1-\af$ fraction of the resources and $\load{B}=|\set{U}_s\cap\opue|$ users share the rest $\af$;   $\load{2}=\load{B} + \load{\bar{B}} -1$ (one is subtracted to account for double counting of the typical user).  The available resources at an AP are assumed to be shared equally among the associated  users. This results in each user having a   rate   proportional to its link's spectral efficiency. Round-robin scheduling is an approach which results in such equipartition of resources.  Further, user queues are assumed  \textit{saturated} implying that each AP always has data to transmit to its associated mobile users. Thus, the rate of a typical user $u$ is 
\begin{equation}\label{eq:ratedef}
\rate{} = \sum_{l \in \{1,\ipueindex,\opueindex\}}\frac{\indic(u\in\set{U}_l)}{\effres{l}\load{l}}\res\log\left(1+\SINR\right).
\end{equation}
The above rate allocation model assumes infinite backhaul bandwidth for all APs, which may be particularly questionable for small cells. Discussion about limited   backhaul bandwidth is deferred to Sec. \ref{sec:bkhl}.

\subsection{Rate and $\SINR$ Coverage}
The rate and $\SINR$ coverage can be formally defined as follows.
\begin{definition} \textbf{Rate/$\SINR$ Coverage}:
The rate coverage for a rate threshold $\RATEthresh$ is  
\begin{equation}
\psys(\RATEthresh)\triangleq\pr(R>\RATEthresh),
\end{equation}
and $\SINR$ coverage for a threshold $\SINRthresh$ is  
\begin{equation}
\pcov(\SINRthresh)\triangleq \pr\left(\SINR> \SINRthresh\right). 
\end{equation}
\end{definition}
The coverage can be equivalently  interpreted  as (i) the probability that a randomly chosen user can achieve a
target threshold, (ii) the average fraction of users in the network who at any time achieve the corresponding threshold, or (iii) the average fraction of the network area that is receiving rate/$\SINR$ greater than the rate/$\SINR$ threshold. 

\section{Rate Distribution}\label{sec:ratedist}
This section derives the  load distribution and $\SINR$ distribution, which are subsequently used for deriving the  rate distribution (coverage) and is the main technical section of the paper.
\subsection{\texorpdfstring{$\SINR$}{SINR} Distribution}
For completely  characterizing the $\SINR$ and rate distribution, the average fraction of users belonging to the respective three disjoint sets ($\mue$, $\ipue$, and $\opue$) is needed.  Using the  ergodicity  of the  PPP, these fractions are equal to the association probability of a typical user to these sets, which are derived in the following lemma.
\begin{lem}\label{lem:aspr}
(Association probabilities)  The association probability, defined as $\passoc_{l} \triangleq \pr(u\in \set{U}_l)$, is given below for each set
\begin{equation}\label{eq:aspr}
\passoc_{1}= 2\pi\dnsty{1}\int_{0}^{\infty}{z \exp\left(-\pi\sum_{k=1}^2\dnsty{k}(\npower{k}\nbias{k})^{2/\ple{k}}z^{2/\nple{k}}\right)}\mathrm{d}z,
\end{equation}
\begin{equation}
\passoc_{\ipueindex} = 2\pi\dnsty{2}\int_{0}^{\infty}{z \exp\left(-\pi\sum_{k=1}^2\dnsty{k}(\npower{k})^{2/\ple{k}}z^{2/\nple{k}}\right)}\mathrm{d}z,
\end{equation}
\begin{multline}
\passoc_{\opueindex}= 2\pi\dnsty{2}\int_{0}^{\infty}z \Bigg\{\exp\left(-\pi\sum_{k=1}^2\dnsty{k}(\npower{k}\nbias{k})^{2/\ple{k}}z^{2/\nple{k}} \right)
\\- \exp\left(-\pi\sum_{k=1}^2\dnsty{k}(\npower{k})^{2/\ple{k}}z^{2/\nple{k}} \right)\Bigg\}\mathrm{d}z.
 \end{multline}
If path loss exponents are same, i.e., $\ple{k}\equiv\ple{}$, the association probabilities simplify to:
\begin{equation}\label{eq:simassocpr}
\begin{aligned}
\passoc_{1}&= \frac{\dnsty{1}}{\sum_{k=1}^2\dnsty{k}(\npower{k}\nbias{k})^{2/\ple{}}}\,,\passoc_{\ipueindex} = \frac{\dnsty{2}}{\sum_{k=1}^2\dnsty{k}(\npower{k})^{2/\ple{}}}\nonumber,
\end{aligned}
\end{equation}
\begin{equation}
\passoc_{\opueindex} =\frac{\dnsty{2}}{\sum_{k=1}^2\dnsty{k}(\npower{k}\nbias{k})^{2/\ple{}}}- \frac{\dnsty{2}}{\sum_{k=1}^2\dnsty{k}(\npower{k})^{2/\ple{}}}.
\end{equation}
\end{lem}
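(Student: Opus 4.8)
The plan is to evaluate each $\passoc_l=\pr(u\in\set{U}_l)$ by conditioning on the distance to the nearest AP of one tier and then invoking the void probability of the (independent) other tier. The first observation is that, since the biased received power $\power{k}\bias{k}x^{-\ple{k}}$ is strictly decreasing in $x$, the strongest AP of tier $k$ seen by the typical user is always its \emph{nearest} AP, at distance $\NDIST{k}$; hence the association rule (\ref{eq:association}) and the three sets in (\ref{eq:setdef}) depend on the configuration only through the pair $(\NDIST{1},\NDIST{2})$, and these two variables are independent because $\PPP{1}$ and $\PPP{2}$ are independent PPPs. For a homogeneous planar PPP of density $\dnsty{k}$, the nearest-neighbour distance satisfies $\pr(\NDIST{k}>r)=\exp(-\pi\dnsty{k}r^2)$ (the void probability of a disk of radius $r$), with density $f_{\NDIST{k}}(z)=2\pi\dnsty{k}z\,e^{-\pi\dnsty{k}z^2}$. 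Ties occur with probability zero, so the $\le/<$ distinctions in (\ref{eq:setdef}) are immaterial.

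Next I would treat the three sets in turn. For $\passoc_{1}$, note that $u\in\mue$ exactly when tier $1$ is the serving tier, i.e. $\power{1}\NDIST{1}^{-\ple{1}}\ge\power{2}\bias{}\NDIST{2}^{-\ple{2}}$; conditioning on $\NDIST{1}=z$, this is the event $\NDIST{2}\ge(\power{2}\bias{}/\power{1})^{1/\ple{2}}z^{\ple{1}/\ple{2}}$, of probability $\exp(-\pi\dnsty{2}(\power{2}\bias{}/\power{1})^{2/\ple{2}}z^{2\ple{1}/\ple{2}})$ by the void probability. Multiplying by $f_{\NDIST{1}}(z)$, integrating over $z$, and rewriting $\power{2}\bias{}/\power{1}=\npower{2}\nbias{2}$, $\ple{1}/\ple{2}=1/\nple{2}$ (with the $k=1$ term $\dnsty{1}z^2$ supplied by the exponent of $f_{\NDIST{1}}$) gives the stated expression for $\passoc_{1}$. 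For $\passoc_{\ipueindex}$ I would instead condition on $\NDIST{2}=z$: $u\in\ipue$ iff $\power{2}\NDIST{2}^{-\ple{2}}>\power{1}\NDIST{1}^{-\ple{1}}$ — an event that already forces $j=2$ since $\bias{}\ge 1$ — i.e. $\NDIST{1}>(\power{1}/\power{2})^{1/\ple{1}}z^{\ple{2}/\ple{1}}$, of probability $\exp(-\pi\dnsty{1}(\power{1}/\power{2})^{2/\ple{1}}z^{2\ple{2}/\ple{1}})$; integrating against $f_{\NDIST{2}}$ yields the claimed formula (here the serving tier is $j=2$, so no bias factor appears). For $\passoc_{\opueindex}$, the cleanest route is the set identity $\opue=\pue\setminus\ipue$ read off directly from (\ref{eq:setdef}), where $\pue=\{\power{2}\bias{}\NDIST{2}^{-\ple{2}}>\power{1}\NDIST{1}^{-\ple{1}}\}$ is the full biased tier-$2$ association event; computing $\pr(u\in\pue)$ by the same $\NDIST{2}$-conditioning (replacing $\power{2}$ by $\power{2}\bias{}$, equivalently $\nbias{1}$ by $1/\bias{}$) and subtracting $\passoc_{\ipueindex}$ produces the difference of two integrals in the statement.

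Finally, the simplification for $\ple{k}\equiv\ple{}$ follows by setting $\nple{k}=1$, so every exponent collapses to $-\pi z^2\sum_{k}\dnsty{k}(\cdot)^{2/\ple{}}$ and the elementary identity $\int_0^\infty z\,e^{-\pi c z^2}\,\mathrm{d}z=1/(2\pi c)$ reduces each expression to the displayed ratio of densities. The only genuine subtlety — and the step I would state most carefully — is the reduction ``strongest AP of $\PPP{k}$ $=$ nearest AP of $\PPP{k}$'' together with the independence of $\NDIST{1}$ and $\NDIST{2}$; once that is in place the remainder is routine, the main chore being to keep the normalized quantities $\npower{k},\nbias{k},\nple{k}$ referenced to the correct serving tier in each of the three cases.
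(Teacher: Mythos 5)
Your proposal is correct and follows essentially the same route as the paper's proof: condition on the nearest-AP distance of the relevant tier, invoke the void probability of the other (independent) PPP, and use $f_{\NDIST{k}}(z)=2\pi\dnsty{k}z e^{-\pi\dnsty{k}z^2}$, with the equal-exponent case reduced via $\int_0^\infty z e^{-\pi c z^2}\,\mathrm{d}z=1/(2\pi c)$. Your computation of $\passoc_{\opueindex}$ as $\pr(u\in\pue)-\passoc_{\ipueindex}$ is just a rephrasing of the paper's direct calculation of the probability that $\NDIST{1}$ falls between the two thresholds, so there is no substantive difference.
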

\begin{proof}
See Appendix \ref{sec:proofassocpr}.
\end{proof}

\begin{figure*}
\setcounter{MYtempeqncnt}{\value{equation}}
\setcounter{equation}{13}
\begin{align}
\pcov_1(\SINRthresh)& =2\pi\frac{\dnsty{1}}{\passoc_1}\int\limits_{0}^{\infty}\ndistnsc \exp\left\{-\frac{\SINRthresh}{\SNR_1(y)} -\pi\sum_{k=1}^2\dnsty{k}\npower{k}^{2/\ple{k}}\Q(\SINRthresh,\ple{k},\nbias{k})\ndistnsc^{2/\nple{k}}\right\}\mathrm{d} \ndistnsc \label{eq:pcov1}\\
\pcov_{\ipueindex}(\SINRthresh)& =2\pi\frac{\dnsty{2}}{\passoc_{\ipueindex}}\int_{0}^{\infty}\ndistnsc \exp\left\{-\frac{\SINRthresh}{\SNR_2(y)} -\pi\sum_{k=1}^2\dnsty{k}\npower{k}^{2/\ple{k}}\Q(\SINRthresh,\ple{k},1)\ndistnsc^{2/\nple{k}}\right\}\mathrm{d} \ndistnsc \label{eq:pcov2}
\end{align}
\small
\begin{align}
\pcov_{\opueindex }(\SINRthresh) =2\pi\frac{\dnsty{2}}{\passoc_{\opueindex }}&\int_{0}^{\infty}\ndistnsc \exp\left\{-\frac{\SINRthresh}{\SNR_2(y)}-\pi\dnsty{2}\Q(\SINRthresh,\ple{2},1)\ndistnsc^2-\pi\dnsty{1}\npower{1}^{2/\ple{1}}\ndistnsc^{2/\nple{1}}\right\}\left\{\exp\left(-\pi\dnsty{1}\npower{1}^{2/\ple{1}}\ndistnsc^{2/\nple{1}}(\nbias{1}^{2/\ple{1}}-1)\right)
-1\right\}\mathrm{d} \ndistnsc \label{eq:pcov3}\,,
\end{align}
\normalsize
\setcounter{equation}{\value{MYtempeqncnt}}
\hrulefill
\vspace*{4pt}
\end{figure*}
\begin{figure*}
\setcounter{MYtempeqncnt}{\value{equation}}
\setcounter{equation}{18}
\begin{align}
\psys_1(\RATEthresh) &=2\pi\frac{\dnsty{1}}{\passoc_{1}} \sum_{n\ge 1}\pmf_1(n)
\int_{0}^{\infty}\ndistnsc \exp \Bigg\{-\frac{\uRATEthresh(n\nRATEthresh \effres{1})}{\SNR_1(y)} -\pi\sum_{k=1}^2\dnsty{k}\npower{k}^{2/\ple{k}}\Q(\uRATEthresh(n\nRATEthresh \effres{1}),\ple{k},\nbias{k})\ndistnsc^{2/\nple{k}}\Bigg\}\mathrm{d} \ndistnsc \,\label{eq:rcov1}\\
\psys_{\ipueindex}(\RATEthresh) & =2\pi\frac{\dnsty{2}}{\passoc_{\ipueindex}} \sum_{n\ge 1}\pmf_{\ipueindex}(n)\int_{0}^{\infty}\ndistnsc \exp \Bigg\{-\frac{\uRATEthresh(n\nRATEthresh\effres{\ipueindex} )}{\SNR_2(y)} -\pi\sum_{k=1}^2\dnsty{k}\npower{k}^{2/\ple{k}}\Q(\uRATEthresh(n\nRATEthresh\effres{\ipueindex}),\ple{k},1)\ndistnsc^{2/\nple{k}}\Bigg\}\mathrm{d} \ndistnsc \label{eq:rcov2}\\
\psys_{\opueindex }(\RATEthresh) &=2\pi\frac{\dnsty{2}}{\passoc_{\opueindex }} \sum_{n\ge 1}\pmf_{\opueindex}(n)\int_{0}^{\infty}\ndistnsc\Bigg\{ \exp\left(-\frac{\uRATEthresh(n\nRATEthresh\effres{\opueindex })}{\SNR_2(y)} -\pi\dnsty{2}\ndistnsc^2\Q(\uRATEthresh(n\nRATEthresh\effres{\opueindex }),\ple{2},1)-\pi\dnsty{1}\ndistnsc^{2/\nple{1}}(\npower{1}\nbias{1})^{2/\ple{1}}\right)\nonumber\\&
-\exp\left(-\frac{\uRATEthresh(n\nRATEthresh \effres{\opueindex })}{\SNR_2(y)} -\pi\dnsty{2}\ndistnsc^2\Q(\uRATEthresh(n\nRATEthresh \effres{\opueindex }),\ple{2},1 )-\pi\dnsty{1}\ndistnsc^{2/\nple{1}}(\npower{1})^{2/\ple{1}}\right)\Bigg\}\mathrm{d} \ndistnsc, \label{eq:rcov3}
\end{align}
\setcounter{equation}{\value{MYtempeqncnt}}
\hrulefill
\vspace*{4pt}
\end{figure*}
Equation (\ref{eq:simassocpr}) corroborates the intuition that  increasing association bias $\bias{}$  leads to decrease in the mean  population of macro cell users implied by the decreasing  $\passoc_{1}$. On the other hand, the mean population of range expanded users increases  implied by the increasing $\passoc_{\opueindex}$. 
 Further,  $\passoc_{2}\triangleq \passoc_{\ipueindex} + \passoc_{\opueindex }$ is the  probability of a typical user associating with  the  tier 2.

The conditional $\SINR$ coverage, when a typical user $u\in\set{U}_l$ is $\pcov_l(\SINRthresh)\triangleq \pr\left(\SINR >\SINRthresh| u \in \set{U}_l\right).$
\begin{lem}\label{lem:pcov}
($\SINR$ Coverage) For a typical user in the setup of Sec. \ref{sec:sysmodel}, the $\SINR$ coverage  is
\begin{equation}
\pcov(\SINRthresh)= \passoc_1\pcov_1(\SINRthresh) + \passoc_{\ipueindex}\pcov_{\ipueindex}(\SINRthresh) + \passoc_{\opueindex }\pcov_{\opueindex }(\SINRthresh), 
\end{equation}\addtocounter{equation}{3}
where the conditional $\SINR$ coverage are given by (\ref{eq:pcov1})-(\ref{eq:pcov3}),

 $\Q(a,b,c )= c^{2/b} +  a^{2/b}\int_{(\frac{c}{a})^{2/b}}^\infty \frac{\mathrm{d} u}{ 1+ u^{b/2}}$, 
 and $ \SNR_{k} (y)= \frac{\power{k}\ndistnsc^{-\ple{k}}}{\noisepower}$.
\end{lem}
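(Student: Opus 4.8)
The plan is to condition on the association event and on the serving-link distance, and then exploit the exponential distribution of the Rayleigh fading power together with the probability generating functional (PGFL) of a PPP. Since $\mue$, $\ipue$ and $\opue$ partition $\set{U}$, the law of total probability gives $\pcov(\SINRthresh)=\sum_{l}\passoc_l\pcov_l(\SINRthresh)$ with the $\passoc_l$ supplied by Lemma~\ref{lem:aspr}; the work is entirely in the three conditional coverages.

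\textbf{The sets $\mue$ and $\ipue$.} Let $j=\tmap{l}$ be the serving tier. First I would obtain the conditional density of the tagged distance $\NDISTc{l}$. Conditioned on $u\in\set{U}_l$, the tagged AP is the nearest tier-$j$ point, so $\PPP{j}$ has no point in the disk of radius $\ndistnsc$, and the biased-power rule (\ref{eq:association})--(\ref{eq:setdef}) forces the nearest point of the other tier $k$ to lie outside the disk of radius $R_k \triangleq (\npower{k}\nbias{k})^{1/\ple{k}}\ndistnsc^{1/\nple{k}}$, where $\nbias{k}\equiv 1$ when $l=\ipueindex$ (the $\ipue$ condition uses unbiased power). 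Using independence of $\PPP{1}$, $\PPP{2}$ and the void probability $e^{-\pi\dnsty{k}r^2}$, and noting that the serving-tier radius is $\ndistnsc$ since $(\npower{j}\nbias{j})^{2/\ple{j}}=1$, the density is $f_{\NDISTc{l}}(\ndistnsc)=(2\pi\dnsty{j}/\passoc_l)\,\ndistnsc\exp(-\pi\sum_{k}\dnsty{k}(\npower{k}\nbias{k})^{2/\ple{k}}\ndistnsc^{2/\nple{k}})$, whose normalizer is exactly $\passoc_l$. Next, given $\NDISTc{l}=\ndistnsc$, I would write $\pr(\SINR>\SINRthresh\mid\ndistnsc)=\pr(\chanl_\ndistnsc>\SINRthresh\ndistnsc^{\ple{j}}(\textstyle\sum_{k}I_{y,k}+\noisepower)/\power{j})$; since $\chanl_\ndistnsc\sim\exp(1)$ this equals $\exp(-\SINRthresh/\SNR_j(y))\prod_{k}\mathcal{L}_{I_{y,k}}(\SINRthresh\ndistnsc^{\ple{j}}/\power{j})$, where the conditional tier-$k$ interference field is a PPP on the complement of the disk of radius $R_k$. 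The PGFL with $\chanl_x\sim\exp(1)$ gives $\mathcal{L}_{I_{y,k}}(s)=\exp(-2\pi\dnsty{k}\int_{R_k}^{\infty}\frac{s\power{k}r^{-\ple{k}}}{1+s\power{k}r^{-\ple{k}}}r\,\mathrm{d}r)$, and the substitution $u=r^{2}(\SINRthresh\npower{k}\ndistnsc^{\ple{j}})^{-2/\ple{k}}$ turns the inner integral, once merged with the void-probability exponent already present in $f_{\NDISTc{l}}$, into precisely $\pi\dnsty{k}\npower{k}^{2/\ple{k}}\Q(\SINRthresh,\ple{k},\nbias{k})\ndistnsc^{2/\nple{k}}$, with $\Q$ as defined in the lemma. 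Integrating over $\ndistnsc$ yields (\ref{eq:pcov1})--(\ref{eq:pcov2}).

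\textbf{The set $\opue$.} Two things change. Because the macro tier is muted on the resources serving $\opue$, (\ref{eq:sinrdef}) carries no tier-$1$ interference, so only $\mathcal{L}_{I_{y,2}}$ appears and the tier-$2$ exclusion contributes $\pi\dnsty{2}\Q(\SINRthresh,\ple{2},1)\ndistnsc^2$. On the other hand, the membership condition in (\ref{eq:setdef}) is two-sided: the nearest tier-$1$ point must lie in the annulus with radii $(\npower{1}\nbias{1})^{1/\ple{1}}\ndistnsc^{1/\nple{1}}$ and $\npower{1}^{1/\ple{1}}\ndistnsc^{1/\nple{1}}$, an event of probability $e^{-\pi\dnsty{1}(\npower{1}\nbias{1})^{2/\ple{1}}\ndistnsc^{2/\nple{1}}}-e^{-\pi\dnsty{1}\npower{1}^{2/\ple{1}}\ndistnsc^{2/\nple{1}}}$; factoring $e^{-\pi\dnsty{1}\npower{1}^{2/\ple{1}}\ndistnsc^{2/\nple{1}}}$ out of this difference produces exactly the bracketed factor $\{\exp(-\pi\dnsty{1}\npower{1}^{2/\ple{1}}\ndistnsc^{2/\nple{1}}(\nbias{1}^{2/\ple{1}}-1))-1\}$ of (\ref{eq:pcov3}). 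Assembling this with the noise term $\exp(-\SINRthresh/\SNR_2(y))$ and normalizing by $\passoc_{\opueindex}$ gives (\ref{eq:pcov3}).

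\textbf{Main obstacle.} The delicate point is justifying the conditional law of the interfering processes: one must argue that, conditioned on $u\in\set{U}_l$ and on $\NDISTc{l}=\ndistnsc$, each interfering tier is still a homogeneous PPP with merely its association-induced exclusion region removed, and that the two tiers stay independent. This rests on the mutual independence of $\PPP{1},\PPP{2},\PPPu$ and on Slivnyak's theorem, so that adding the typical user at the origin does not alter the AP processes and the conditioning only imposes emptiness of the exclusion disks (and, for $\opue$, presence of a tier-$1$ point in the outer disk). Granting this, everything else is the bookkeeping above that fuses the void probabilities with the interference Laplace transforms into the single function $\Q$.
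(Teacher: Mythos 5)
Your proposal is correct and follows essentially the same route as the paper's proof: deriving the conditional tagged-distance densities from the void probabilities of Lemma~\ref{lem:aspr}, exploiting the exponential fading to reduce the conditional coverage to a product of interference Laplace transforms evaluated via the PGFL with association-induced exclusion radii, merging the void-probability exponents with the interference integrals into the function $\Q$, and handling $\opue$ through the annulus condition and the absence of tier-1 interference. No substantive differences to report.
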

\begin{IEEEproof}
See Appendix \ref{sec:proofpcov}.
\end{IEEEproof}
The result in Lemma  \ref{lem:pcov} is for the most general case and involves a single numerical integration  along with a lookup table for $\Q$. The expressions can be  further simplified as in the following corollary. 
\begin{cor}\label{cor:pcovequalple}
With noise ignored, $\SNR_k \to \infty$, assuming equal path loss exponents $\ple{k} \equiv \ple{}$, the $\SINR$  coverage  of a typical user is  
\small
\begin{multline}
\pcov(\SINRthresh) = \frac{\dnsty{1}}{\sum_{k=1}^2\dnsty{k}(\power{k}/\power{1})^{2/\ple{}}\Q(\SINRthresh,\ple{},\bias{k})} \\+ \frac{\dnsty{2}}{\sum_{k=1}^2\dnsty{k}(\power{k}/\power{2})^{2/\ple{}}\Q(\SINRthresh,\ple{},1)} \\+ \frac{\dnsty{2}}{ \dnsty{2}\Q(\SINRthresh,\ple{},1)+\dnsty{1}\left\{\power{1}/(\power{2}\bias{2})\right\}^{2/\ple{}}} \\ - \frac{\dnsty{2}}{ \dnsty{2}\Q(\SINRthresh,\ple{},1)+\dnsty{1}(\power{1}/\power{2})^{2/\ple{}}}.
\end{multline}
\normalsize
\end{cor}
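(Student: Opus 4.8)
The plan is to take the general formulas (\ref{eq:pcov1})--(\ref{eq:pcov3}) of Lemma \ref{lem:pcov} and impose the two stated simplifications directly. Ignoring noise means $\SNR_k(y)\to\infty$, so every factor $\exp(-\SINRthresh/\SNR_k(y))$ collapses to $1$. Setting $\ple{k}\equiv\ple{}$ forces $\nple{k}=1$, so each $\ndistnsc^{2/\nple{k}}$ becomes $\ndistnsc^2$, each $\npower{k}^{2/\ple{k}}$ becomes $(\power{k}/\power{j})^{2/\ple{}}$, and (using $\bias{1}=1$) $\nbias{k}=\bias{k}/\bias{j}$ — in particular $\nbias{1}=1$ and $\nbias{2}=\bias{}$ when the typical user is served by tier $1$, whereas $\nbias{1}=1/\bias{2}$ when it is served by tier $2$. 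After these substitutions $\pcov_1(\SINRthresh)$ and $\pcov_{\ipueindex}(\SINRthresh)$ each reduce to a constant multiple of $\int_0^\infty \ndistnsc\,\exp(-c\,\ndistnsc^2)\,\mathrm{d}\ndistnsc$ for a suitable $c>0$, while $\pcov_{\opueindex}(\SINRthresh)$ reduces to a constant multiple of $\int_0^\infty \ndistnsc\,\exp(-a\,\ndistnsc^2)\,(\exp(-b\,\ndistnsc^2)-1)\,\mathrm{d}\ndistnsc$.

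Next I would evaluate these elementary integrals using $\int_0^\infty \ndistnsc\,e^{-c\ndistnsc^2}\,\mathrm{d}\ndistnsc = 1/(2c)$, which is legitimate since $\Q(\SINRthresh,\ple{},1)>0$ (and finite for $\ple{}>2$) and all densities and powers are positive, so each relevant constant is strictly positive. For $\pcov_1$ the exponent constant is $c=\pi\sum_{k}\dnsty{k}(\power{k}/\power{1})^{2/\ple{}}\Q(\SINRthresh,\ple{},\bias{k})$, giving $\pcov_1(\SINRthresh)=\dnsty{1}/(\passoc_1 c)$; similarly $\pcov_{\ipueindex}(\SINRthresh)=\dnsty{2}/(\passoc_{\ipueindex}\,\pi\sum_{k}\dnsty{k}(\power{k}/\power{2})^{2/\ple{}}\Q(\SINRthresh,\ple{},1))$. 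For the offloaded set, write $a=\pi\dnsty{2}\Q(\SINRthresh,\ple{},1)+\pi\dnsty{1}(\power{1}/\power{2})^{2/\ple{}}$ and $b=\pi\dnsty{1}(\power{1}/\power{2})^{2/\ple{}}(\bias{2}^{-2/\ple{}}-1)$; then the bracketed integral equals $1/(2(a+b))-1/(2a)$ with $a+b=\pi\dnsty{2}\Q(\SINRthresh,\ple{},1)+\pi\dnsty{1}\{\power{1}/(\power{2}\bias{2})\}^{2/\ple{}}$, so $\pcov_{\opueindex}(\SINRthresh)=(\pi\dnsty{2}/\passoc_{\opueindex})(1/(a+b)-1/a)$.

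Finally I would assemble $\pcov(\SINRthresh)=\passoc_1\pcov_1(\SINRthresh)+\passoc_{\ipueindex}\pcov_{\ipueindex}(\SINRthresh)+\passoc_{\opueindex}\pcov_{\opueindex}(\SINRthresh)$ as in Lemma \ref{lem:pcov}: the prefactors $\passoc_l$ cancel the $1/\passoc_l$ terms, the common $\pi$ cancels in numerator and denominator, and the four resulting fractions are precisely the four terms of the claimed identity. There is no genuinely hard step here — the substantive work was already done in Lemma \ref{lem:pcov}; the only points requiring care are the bookkeeping of the normalized quantities $\npower{k},\nbias{k}$ under the two conditioning events (serving tier $1$ versus serving tier $2$) and checking positivity and finiteness of $c$, $a$ and $a+b$ so that the Gaussian integrals and the resulting fractions are well defined.
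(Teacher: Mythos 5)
Your proposal is correct and follows exactly the route the paper intends (the corollary is stated without a written proof precisely because it is this direct specialization of Lemma \ref{lem:pcov}): set $\SNR_k\to\infty$ and $\ple{k}\equiv\ple{}$ in (\ref{eq:pcov1})--(\ref{eq:pcov3}), evaluate the resulting integrals via $\int_0^\infty y e^{-cy^2}\mathrm{d}y = 1/(2c)$, and recombine with the association probabilities so the $\passoc_l$ factors cancel. Your bookkeeping of the normalized biases ($\nbias{2}=\bias{2}$ under tier-1 service, $\nbias{1}=1/\bias{2}$ under tier-2 service) and of $a+b=\pi\dnsty{2}\Q(\SINRthresh,\ple{},1)+\pi\dnsty{1}\{\power{1}/(\power{2}\bias{2})\}^{2/\ple{}}$ is exactly right.
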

As evident from the above Lemma and Corollary, $\SINR$ coverage is independent of the resource partitioning fraction  $\af$  because of the independence of  $\SINR$ on the  amount of resources allocated to a user in our model.  Further, the $\SINR$ distribution of  the small cell users, $\pcov_{\ipueindex}$, is independent of association bias, as $\ipue$ is independent of bias. Further  insights about $\SINR$ coverage are deferred until the next section. In general, we show that $\SINR$ coverage with and without resource partitioning show considerably different behavior, which is also reflected in the rate coverage trends.

\begin{figure*}[!t]
\setcounter{MYtempeqncnt}{\value{equation}}
\setcounter{equation}{28}
\begin{align}
\avpsys_1(\RATEthresh) &= 2\pi\frac{\dnsty{1}}{\passoc_1}\int_{0}^{\infty}\ndistnsc \exp\Bigg\{-\frac{\uRATEthresh(\nRATEthresh\avload{1}\effres{1})}{\SNR_1(\ndistnsc)} -\pi\sum_{k=1}^2\dnsty{k}\npower{k}^{2/\ple{k}}\Q(\uRATEthresh(\nRATEthresh\avload{1}\effres{1}),\ple{k},\nbias{k})\ndistnsc^{2/\nple{k}}\Bigg\}\mathrm{d} \ndistnsc \label{eq:mrcov1}\\
\avpsys_{\ipueindex}(\RATEthresh)&=2\pi\frac{\dnsty{2}}{\passoc_{\ipueindex}}\int_{0}^{\infty}\ndistnsc \exp\Bigg\{-\frac{\uRATEthresh(\nRATEthresh\avload{\ipueindex}\effres{\ipueindex})}{\SNR_2(\ndistnsc)} -\pi\sum_{k=1}^2\dnsty{k}\npower{k}^{2/\ple{k}}\Q(\uRATEthresh(\nRATEthresh\avload{\ipueindex}\effres{\ipueindex}),\ple{k},1)\ndistnsc^{2/\nple{k}}\Bigg\}\mathrm{d} \ndistnsc \label{eq:mrcov2}
\end{align}
\small
\begin{align}
\avpsys_{\opueindex }(\RATEthresh)&=2\pi\frac{\dnsty{2}}{\passoc_{\opueindex }} \int_{0}^{\infty}\ndistnsc \exp\left\{-\frac{\uRATEthresh(\nRATEthresh\avload{\opueindex }\effres{\opueindex })}{\SNR_2(\ndistnsc)} -\pi\dnsty{2}\ndistnsc^2\Q(\uRATEthresh(\nRATEthresh\avload{\opueindex }\effres{\opueindex }),\ple{2},1)-\pi\dnsty{1}\ndistnsc^{2/\nple{1}}\npower{1}^{2/\ple{1}}\right\}\left\{\exp\Big\{-\pi\dnsty{1}\npower{1}^{2/\ple{1}}\ndistnsc^{2/\nple{1}}(\nbias{1}^{2/\ple{1}}-1)\Big\}
-1\right\}\mathrm{d} \ndistnsc \, \label{eq:mrcov3},
\end{align}
\normalsize
\setcounter{equation}{\value{MYtempeqncnt}}
\hrulefill
\vspace*{4pt}
\end{figure*}

\subsection{Main Result}
Similar to the  conditional $\SINR$ coverage, conditional rate coverage, when a typical user $u\in\set{U}_l$ is 
$\rcov_l(\RATEthresh)\triangleq \pr\left(R >\RATEthresh| u \in \set{U}_l\right). $
 The following theorem gives the rate distribution over the entire network. 
\begin{thm}\label{thm:rcov}
(Rate Coverage)  For a typical user in the setup of Sec. \ref{sec:sysmodel}, the  rate coverage is 
\begin{equation}
\psys(\RATEthresh) = \passoc_1\psys_1(\RATEthresh) + \passoc_{\ipueindex}\psys_{\ipueindex}(\RATEthresh) + \passoc_{\opueindex }\psys_{\opueindex }(\RATEthresh) ,
\end{equation}\addtocounter{equation}{3}
where the conditional rate coverage are given by (\ref{eq:rcov1})-(\ref{eq:rcov3}),  $\pmf_{l}(n) \triangleq\pr\left(\load{l}=n\right)$,
  $\uRATEthresh({x}) = 2^{x}-1$, and $\nRATEthresh=\RATEthresh/\res$.
\end{thm}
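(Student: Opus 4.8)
The plan is to reduce the theorem to Lemma~\ref{lem:pcov} by a change of threshold, after conditioning on the serving-tier/offloading status of the typical user and on the load of its tagged AP. Since $\mue$, $\ipue$, $\opue$ are disjoint with $\mue\cup\ipue\cup\opue=\set{U}$ by~(\ref{eq:setdef}), the law of total probability gives $\psys(\RATEthresh)=\sum_{l\in\{1,\ipueindex,\opueindex\}}\passoc_l\,\psys_l(\RATEthresh)$ with the weights $\passoc_l$ supplied by Lemma~\ref{lem:aspr}. It therefore suffices to establish the three conditional expressions~(\ref{eq:rcov1})--(\ref{eq:rcov3}).

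Fix $l$ and condition on $u\in\set{U}_l$. By~(\ref{eq:ratedef}) the rate is then $R=\frac{\res}{\effres{l}\load{l}}\log_2(1+\SINR)$, a deterministic increasing function of $\SINR$ once the load $\load{l}$ is fixed. I would next condition on $\load{l}=n$ (noting $\load{l}\ge 1$ since the tagged AP serves at least $u$), writing $\psys_l(\RATEthresh)=\sum_{n\ge1}\pmf_l(n)\,\pr\!\left(R>\RATEthresh\mid u\in\set{U}_l,\load{l}=n\right)$, and invoke the (modeling) assumption that, conditioned on $u\in\set{U}_l$, the load $\load{l}$ is independent of the typical user's $\SINR$; this is what lets the conditional load PMF $\pmf_l(n)$ enter as a plain weight. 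Rearranging $R>\RATEthresh$ yields $\SINR>2^{\,n\RATEthresh\effres{l}/\res}-1=\uRATEthresh(n\nRATEthresh\effres{l})$ with $\nRATEthresh=\RATEthresh/\res$ and $\uRATEthresh(x)=2^x-1$, so $\pr(R>\RATEthresh\mid u\in\set{U}_l,\load{l}=n)=\pcov_l\big(\uRATEthresh(n\nRATEthresh\effres{l})\big)$ by the definition of $\pcov_l$.

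It then remains to substitute the closed forms for $\pcov_l$ from Lemma~\ref{lem:pcov}, i.e.\ expressions~(\ref{eq:pcov1})--(\ref{eq:pcov3}), evaluated at the shifted argument $\uRATEthresh(n\nRATEthresh\effres{l})$, and to pull the sum $\sum_{n\ge1}\pmf_l(n)$ outside the distance integral; this reproduces~(\ref{eq:rcov1})--(\ref{eq:rcov3}) verbatim. The resource-partitioning split is encoded entirely through $\effres{l}$, which equals $1/(1-\af)$ for $l\in\{1,\ipueindex\}$ and $1/\af$ for $l=\opueindex$, and the $\opueindex$ case automatically inherits from~(\ref{eq:pcov3}) the absence of macro-tier interference (only the density-$\dnsty{1}$ association-region term survives, with no $\Q$ factor). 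Assembling the three conditional pieces with weights $\passoc_l$ finishes the argument.

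The step I expect to be the crux is the independence of the tagged AP's load $\load{l}$ from the typical user's $\SINR$ (equivalently, from the serving distance $\ndistnsc$): the cell containing the typical user is area-biased, so a larger serving distance correlates with a larger cell and hence a larger load, and the exact joint law is intractable. The resolution, consistent with the approximate load characterization used for such HCNs (cf.~\cite{SinDhiAnd13}), is to treat $\load{l}$ as having the marginal PMF $\pmf_l$ independent of $\ndistnsc$; under that approximation the total-probability decomposition and the threshold shift above are exact, and everything else is the substitution already performed in proving Lemma~\ref{lem:pcov}.
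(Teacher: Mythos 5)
Your proposal is correct and follows essentially the same route as the paper's proof: total-probability decomposition over $\{1,\ipueindex,\opueindex\}$, conditioning on the load with the threshold shift $R>\RATEthresh \Leftrightarrow \SINR>\uRATEthresh(n\nRATEthresh\effres{l})$, the load--$\SINR$ independence approximation (as in \cite{SinDhiAnd13}), and substitution of Lemma \ref{lem:pcov}. You also correctly flag the independence approximation as the only non-rigorous step, which is exactly the caveat the paper itself makes.
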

\begin{IEEEproof}
Using (\ref{eq:sinrdef}) and (\ref{eq:ratedef}), the probability that the rate requirement of a random user $u$  is met is
\small
\begin{align}
\pr(R> \RATEthresh)&= 
\sum_{l \in \{1,\ipueindex,\opueindex\}}\pr(u\in\set{U}_l)\pr\left(\frac{\res}{\effres{l}\load{l}}\log\left(1+\SINR\right)>\RATEthresh|u\in \set{U}_l\right)
\\
& = \sum_{l \in \{1,\ipueindex,\opueindex\}}\passoc_{l} \cexpect{\pcov_{l}\left(\uRATEthresh(\nRATEthresh\load{l}\effres{l})\right)}{\load{l}} ,
\end{align}\normalsize
where $\nRATEthresh = \RATEthresh/\res$ and $\uRATEthresh({x}) = 2^{x}-1$. In general, the load and $\SINR$ are correlated, as APs with larger association regions have higher load and larger user to AP distance (and hence lower $\SINR$). However for tractability of the analysis, this dependence is ignored, as in \cite{SinDhiAnd13}, resulting in $\cexpect{\pcov_{l}(\uRATEthresh(x\load{l}))}{\load{l}}= \sum_{n\ge 1}\pmf_{l}(n)\pcov_{l}\left(\uRATEthresh (x n)\right)$, where $\pmf_{l}(n) =\pr\left(\load{l}=n\right)$.  Using Lemma \ref{lem:pcov}, the rate coverage expression is then obtained.
\end{IEEEproof}
The probability mass function of the load depends on the association area, which needs to be characterized. 
\begin{rem}\label{rem:assoc_area}
(Mean Association Area) Association area of an AP is the area of the corresponding association region.  Using the ergodicity of the PPP, the mean of the association area $\area_k$ of a typical AP of $\uth{k}$ tier is $\expect{\area_k} =\frac{\passoc_{k}}{\dnsty{k}}$.
\end{rem}
The association  region of a tier 2 AP can be further partitioned into two regions. The non-shaded region in Fig. \ref{fig:icic} surrounding a small cell at $x$ can be characterized as
\begin{equation}
\assocr_{x_{\ipueindex}}\triangleq \big\{ y \in \R^2: \|y-x\| \leq \left(\power{2}/\power{1}\right)^{1/\ple{2}}\|y-\NAP{k}(y)\|^{\nple{1}},\,\forall k \big\}.
\end{equation}
As per (\ref{eq:setdef}), all the users lying in $\assocr_{x_{\ipueindex}}$ are the small cell users (belonging to $\ipue$) and  recalling (\ref{eq:assocr}) all users lying in $\assocr_{x_{\opueindex}} \triangleq \assocr_{x_{2}}-\assocr_{x_{\ipueindex}}$ are the offloaded users that belong to   $\opue$. In Fig. \ref{fig:icic},  $\assocr_{x_{\opueindex}}$ is the shaded region  surrounding a tier 2 AP. 

\begin{rem}\label{rem:associatioarea} \textit{(Association Area Distribution)} A linear scaling based approximation for the distribution of association areas proposed in \cite{SinDhiAnd13}, which matched the first moment,  is generalized in this paper  to the setting  of resource partitioning as below
\begin{align}\label{eq:area_approx}
\area_{1} = \area\left(\frac{\dnsty{1}}{\passoc_{1}}\right), \\
\area_{\ipueindex} = \area\left(\frac{\dnsty{2}}{\passoc_{\ipueindex}}\right), \text{ and } \area_{\opueindex}& = \area\left(\frac{\dnsty{2}}{\passoc_{\opueindex}}\right),
\end{align} 
where $\area\left(y\right)$ is the area of a typical cell of a Poisson Voronoi (PV) of density $y$ (a scale parameter). 
\end{rem}

Using the area distribution proposed in \cite{Ferenc2007} for  PV $\area(y)$, the following lemma characterizes the probability mass function (PMF) of  the load seen by a typical user.
\begin{lem}\label{lem:oloadpgf}
(Load PMF) The PMF  of the load at tagged AP of a typical user $u\in \set{U}_l$  is
\begin{multline}\label{eq:pgfload}
\pmf_{l}(n) \triangleq\pr\left(\load{l}=n\right)\\=\frac{3.5^{3.5}}{(n-1)!}\frac{\Gamma(n+3.5)}{\Gamma(3.5)}\left(\frac{\userdnsty\passoc_{l}}{\dnsty{\tmap{l}}}\right)^{n-1}\left(3.5 + \frac{\userdnsty\passoc_{l}}{\dnsty{\tmap{l}}}\right)^{-(n+3.5)}  \\n\geq 1,
\end{multline} 
where $\Gamma(x)=\int_{0}^\infty \exp(-t)t^{x-1}\mathrm{d}t$ is the gamma function.
\end{lem}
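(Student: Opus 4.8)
The plan is to realize the load $\load{l}$ at the tagged AP of the typical user as a \emph{size-biased} version of the load at a typical AP, and then to feed the scaled Poisson--Voronoi surrogate of Remark~\ref{rem:associatioarea} through a Gamma--Poisson mixture.

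First I would introduce two distinct load variables. Let $\oload{l}$ be the number of $\set{U}_l$-users falling in the $\set{U}_l$-region of a \emph{typical} tier-$\tmap{l}$ AP (whose area is $\area_{l}$), and let $\load{l}$ be the number of $\set{U}_l$-users sharing the AP of the typical user $u\in\set{U}_l$. Counting AP--user incidences --- equivalently, observing that the $\set{U}_l$-region containing the uniformly placed user $u$ is the area-weighted pick among all such regions --- yields the standard size-biasing identity $\pmf_{l}(n)=\pr(\load{l}=n)=n\,\pr(\oload{l}=n)/\expect{\oload{l}}$ for $n\ge 1$. Note that the ``$+1$'' (i.e.\ $\load{l}\ge 1$) is automatic, since $u$ is itself one of those $n$ users; and $\expect{\oload{l}}=\userdnsty\passoc_{l}/\dnsty{\tmap{l}}$ follows from Remark~\ref{rem:assoc_area} together with $\oload{l}\mid\area_{l}\sim\mathrm{Poisson}(\userdnsty\area_{l})$.

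Next I would pin down $\pr(\oload{l}=n)$. Since $\PPPu$ is independent of the AP processes $\{\PPP{k}\}$ while the $\set{U}_l$-region is a functional of $\{\PPP{k}\}$ alone, conditioning on $\area_{l}=a$ leaves the users in that region Poisson$(\userdnsty a)$. By Remark~\ref{rem:associatioarea}, $\area_{l}$ is approximated by $\area(\dnsty{\tmap{l}}/\passoc_{l})$, and by the fit of \cite{Ferenc2007} the \emph{normalized} Poisson--Voronoi cell area follows a $\mathrm{Gamma}(3.5,3.5)$ law, so $\area_{l}$ is Gamma with shape $3.5$ and scale $\passoc_{l}/(3.5\,\dnsty{\tmap{l}})$. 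Mixing the conditional Poisson over this Gamma is the elementary integral $\int_{0}^{\infty}e^{-\userdnsty a}\frac{(\userdnsty a)^{n}}{n!}\,a^{2.5}e^{-3.5\dnsty{\tmap{l}}a/\passoc_{l}}\,\mathrm{d}a$, which evaluates through $\Gamma(n+3.5)$ and gives the negative-binomial-type law $\pr(\oload{l}=n)=\frac{3.5^{3.5}}{\Gamma(3.5)}\frac{\Gamma(n+3.5)}{n!}\big(\tfrac{\userdnsty\passoc_{l}}{\dnsty{\tmap{l}}}\big)^{n}\big(3.5+\tfrac{\userdnsty\passoc_{l}}{\dnsty{\tmap{l}}}\big)^{-(n+3.5)}$, the $n$-free constant being fixed by summation to $1$. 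Substituting into the size-biasing identity and using $n/n!=1/(n-1)!$ together with $\expect{\oload{l}}=\userdnsty\passoc_{l}/\dnsty{\tmap{l}}$ reproduces (\ref{eq:pgfload}) verbatim. (Equivalently one may bypass the intermediate negative binomial: area-biasing raises the shape of the Gamma area law by one, so $\load{l}-1$ is Poisson$(\userdnsty\,\tilde{\area}_{l})$ with $\tilde{\area}_{l}$ Gamma of shape $4.5$ and the same scale, and the same integral together with $\Gamma(4.5)=3.5\,\Gamma(3.5)$ delivers the claim.)

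The main obstacle is conceptual rather than computational: getting the size-biasing right --- recognizing that the cell of the typical user is area-weighted and that this weighting already accounts for the user's own contribution, so that no extra convolution with a point mass at $1$ is needed beyond the index shift in $n/n!$. A secondary caveat worth stating in the proof is that (\ref{eq:pgfload}) is only as accurate as Remark~\ref{rem:associatioarea}: it is built to match the true mean load exactly, but the full PMF is the moment-matched scaled Poisson--Voronoi surrogate rather than an exact description of cell occupancy in the multiplicatively weighted tessellation.
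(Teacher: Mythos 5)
Your proposal is correct and takes essentially the same route as the paper: the paper's proof invokes the Palm-inversion (area-biased) form of the scaled Poisson--Voronoi area from Remark~\ref{rem:associatioarea}, i.e.\ $f_{\area_{l}^{'}}(\areans)\propto \areans f_{\area_{l}}(\areans)$, and then mixes the Poisson$(\userdnsty \areans)$ count of the other users over this Gamma law of shape $4.5$ --- exactly the equivalent formulation you note parenthetically. Your version, which size-biases the typical-AP load $\oload{l}$ and then evaluates the same Gamma--Poisson integral with the Ferenc--N\'eda fit, is the identical computation with the biasing applied to the count rather than the area, and your closing caveat matches the paper's own footnote that the Palm inversion for the weighted tessellation is assumed rather than proved.
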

\begin{IEEEproof}
See Appendix \ref{sec:proofoloadpgf}.
\end{IEEEproof}

The rate distribution expression for the most general setting requires a single numerical integral after use of lookup tables for $\Q$ and $\Gamma$.  The summation over $n$ in Theorem \ref{thm:rcov} can be accurately approximated as a finite summation  to a sufficiently large value, ${n}_\text{max}$ (say), since both the terms $\pmf_{l}(n)$ and $\pcov_{l}\left(\uRATEthresh(xn))\right)$ decay rapidly for large $n$.   

The rate coverage expression can be further simplified if the load at each AP is assumed to equal its mean. 
\begin{cor}\label{cor:rcovmeanload}
(Mean Load Approximation) Rate coverage with the mean load approximation is given by 
\begin{equation}\label{eq:rcovmeanload}
\avpsys(\RATEthresh) = \passoc_1\avpsys_1(\RATEthresh) + \passoc_{\ipueindex}\avpsys_{\ipueindex}(\RATEthresh) + \passoc_{\opueindex }\avpsys_{\opueindex }(\RATEthresh) ,
\end{equation}
where 
the conditional rate coverage are given by (\ref{eq:mrcov1})-(\ref{eq:mrcov3}) \addtocounter{equation}{3}
 and $\avload{l}= \expect{\load{l}} = 1+ \frac{1.28\userdnsty\passoc_{l}}{\dnsty{\tmap{l}}}.$
\end{cor}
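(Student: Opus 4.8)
The plan is to obtain (\ref{eq:rcovmeanload}) as a direct specialization of Theorem~\ref{thm:rcov}. From the proof of that theorem, the conditional rate coverage is $\psys_l(\RATEthresh)=\cexpect{\pcov_l(\uRATEthresh(\nRATEthresh\effres{l}\load{l}))}{\load{l}}$, which after averaging over the random load produces the sum over $n$ in (\ref{eq:rcov1})--(\ref{eq:rcov3}). The mean-load approximation instead replaces the random $\load{l}$ by the deterministic value $\avload{l}\triangleq\expect{\load{l}}$ inside the nonlinear map $n\mapsto\pcov_l(\uRATEthresh(\nRATEthresh\effres{l}n))$, i.e.\ it sets $\avpsys_l(\RATEthresh)=\pcov_l(\uRATEthresh(\nRATEthresh\effres{l}\avload{l}))$. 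Plugging this into the decomposition $\psys(\RATEthresh)=\passoc_1\psys_1(\RATEthresh)+\passoc_{\ipueindex}\psys_{\ipueindex}(\RATEthresh)+\passoc_{\opueindex}\psys_{\opueindex}(\RATEthresh)$ of Theorem~\ref{thm:rcov} gives (\ref{eq:rcovmeanload}).

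The next step is to substitute the closed forms of $\pcov_l$ from Lemma~\ref{lem:pcov}. Evaluating (\ref{eq:pcov1}) at $\SINRthresh=\uRATEthresh(\nRATEthresh\avload{1}\effres{1})$ yields (\ref{eq:mrcov1}); evaluating (\ref{eq:pcov2}) at $\SINRthresh=\uRATEthresh(\nRATEthresh\avload{\ipueindex}\effres{\ipueindex})$ yields (\ref{eq:mrcov2}); and evaluating (\ref{eq:pcov3}) at $\SINRthresh=\uRATEthresh(\nRATEthresh\avload{\opueindex}\effres{\opueindex})$ yields (\ref{eq:mrcov3}). This is pure substitution and adds nothing beyond Lemma~\ref{lem:pcov}.

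It remains to evaluate $\avload{l}=\expect{\load{l}}$ from the PMF in Lemma~\ref{lem:oloadpgf}. The quickest route is the probabilistic interpretation behind (\ref{eq:pgfload}): conditioned on the area of the tagged cell the number of \emph{other} users it contains is Poisson, and the tagged cell is the area-size-biased version of a Poisson--Voronoi cell of density $\dnsty{\tmap{l}}/\passoc_l$ (cf.\ Remarks~\ref{rem:assoc_area}--\ref{rem:associatioarea}), so its mean area is $\expect{\area^2}/\expect{\area}$; under the gamma fit of \cite{Ferenc2007} with shape $3.5$ and mean $\passoc_l/\dnsty{\tmap{l}}$ this ratio equals $(1+1/3.5)\,\passoc_l/\dnsty{\tmap{l}}\approx1.28\,\passoc_l/\dnsty{\tmap{l}}$. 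Adding the typical user gives $\avload{l}=1+1.28\,\userdnsty\passoc_l/\dnsty{\tmap{l}}$. Equivalently, one can simply evaluate $\sum_{n\ge1}n\,\pmf_l(n)$ directly: with the shift $n\mapsto n-1$ this is a one-line generalized-binomial (negative-binomial) sum giving the same value.

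There is no deep obstacle: given Theorem~\ref{thm:rcov}, Lemma~\ref{lem:pcov} and Lemma~\ref{lem:oloadpgf}, the corollary is bookkeeping plus one elementary moment computation. The only real subtlety is the status of the claim — (\ref{eq:rcovmeanload}) is an \emph{approximation}, not an identity, because swapping $\load{l}$ for $\avload{l}$ inside the nonlinear rate map is exact only when that map is affine. Its accuracy is therefore to be assessed numerically and against simulation rather than bounded analytically, consistent with how the paper treats its other mean-field replacements (the load--$\SINR$ decoupling used in Theorem~\ref{thm:rcov} and the area approximation of Remark~\ref{rem:associatioarea}).
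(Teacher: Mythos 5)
Your proposal is correct and follows essentially the same route as the paper: replace the random load by its mean inside the nonlinear rate map, substitute the conditional $\SINR$ coverages of Lemma~\ref{lem:pcov}, and compute $\avload{l}$ from the size-biased tagged-cell area. The only minor difference is the source of the constant: the paper takes $\expect{\area^2(1)}=1.28$ from Gilbert's exact Poisson--Voronoi second moment, whereas your gamma-fit (or direct PMF) computation gives the ratio $1+1/3.5=9/7\approx 1.286$, which agrees with the stated value to the precision used.
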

\begin{IEEEproof}
Lemma \ref{lem:oloadpgf} gives the first moment of load as
$\expect{\load{l}} =  1+ \frac{\userdnsty\passoc_{l}}{\dnsty{\tmap{l}}}\expect{\area^2(1)}$.
Further, using the result that $\expect{\area^2(1)}=1.28$ \cite{Gil62}, 
along with  an approximation 
$\cexpect{\pcov_{k}\left(\uRATEthresh(x\load{k})\right)}{\load{k}} \approx
\pcov_{k}\left(\uRATEthresh(x\expect{\load{k}})\right)$, the simplified rate coverage expression is obtained.
\end{IEEEproof}
The mean load approximation above simplifies the rate coverage expression by eliminating the summation over $n$. The numerical integral can also be eliminated  by ignoring noise and assuming equal path loss exponents (as is done in Sec \ref{sec:rcovtrend}). 
As  can be observed from Theorem \ref{thm:rcov} and Corollary \ref{cor:rcovmeanload}, the rate coverage for range expanded users $\rcov_{\opueindex }$ increases with increase in resource partitioning fraction $\af$, as  users in $\opue$ can be scheduled on a larger fraction of (macro) interference free resources. On the other hand, the rate coverage for the macro users $\rcov_{1}$ and small cell (non-range expanded) users $\rcov_{\ipueindex}$ decreases with the corresponding increase. Further insights on the effect of biasing are delegated to the next section. 
\begin{figure*}
\setcounter{MYtempeqncnt}{\value{equation}}
\setcounter{equation}{37}
\begin{align}
\pcov_{\ipueindex j}(\SINRthresh)& =2\pi\frac{\dnsty{2}}{\passoc_{\ipueindex j}}\int_{0}^{\infty}\ndistnsc \exp\left\{-\frac{\SINRthresh}{\SNR_j(y)} -\pi\left(\sum_{k\neq j}\dnsty{k}\npower{k}^{2/\ple{k}}\Q(\SINRthresh,\ple{k},\bias{k})\ndistnsc^{2/\nple{k}}+ \dnsty{j}\Q(\SINRthresh,\ple{j},1)\ndistnsc^{2} \right)\right\}\mathrm{d} \ndistnsc \label{eq:pcovj1}\\
\pcov_{\opueindex j }(\SINRthresh) & =2\pi\frac{\dnsty{2}}{\passoc_{\opueindex j}}\int_{0}^{\infty}\ndistnsc \exp\left\{-\frac{\SINRthresh}{\SNR_j(y)}-\pi\left(\sum_{k\geq2}\dnsty{k}\npower{k}^{2/\ple{k}}\Q(\SINRthresh,\ple{k},\nbias{k})\ndistnsc^{2/\nple{k}}-\dnsty{1}(\npower{1}\nbias{1})^{2/\ple{1}}y^{2/\nple{1}}\right)\right\}\nonumber \\
&\times\prod_{k\neq j}\left\{1- \exp\left(-\pi\dnsty{k}(\npower{k}\nbias{k})^{2/\ple{k}}\ndistnsc^{2/\nple{k}}(\bias{j}^{2/\ple{k}}-1)\right)
\right\}\mathrm{d} \ndistnsc \label{eq:pcovj2}\,,
\end{align}
\normalsize
\setcounter{equation}{\value{MYtempeqncnt}}
\hrulefill
\vspace*{4pt}
\end{figure*}
\subsection{Rate Coverage with Limited Backhaul Capacities}\label{sec:bkhl}
Analysis in the previous sections assumed infinite backhaul capacities and thus the air interface was the only bottleneck affecting downlink rate. 
However, with limited backhaul capacities $\bkhl{k}$ for BSs of tier $k$, the rate is given by 
\begin{equation}\label{eq:ratebkhl}
R' = 
\indic(u\in\mue)\min\left(R,\frac{\bkhl{1}}{\load{1}}\right) + \indic(u\in\pue)\min\left(R,\frac{\bkhl{2}}{\load{2}}\right),
\end{equation}
where $R$ is the rate of the user with infinite backhaul bandwidth. 
The above rate allocation assumes that the available backhaul bandwidth for a BS of tier $k$, $\bkhl{k}$, is shared equally among the associated users/load $\load{k}$. This allocation model is similar to the fair round robin scheduling and results in the peak rate of a typical user (associated with an AP of tier $k$) being capped at $\frac{\bkhl{k}}{\load{k}}$. The analysis can be extended to incorporate a generic peak rate dependency $f(\bkhl{k},\load{k})$ on backhaul bandwidth and load at the AP (which may result from a different backhaul allocation strategy)\footnote{Exact analysis of wired backhaul allocation among the competing TCP flows could be an area of future investigation.}. 
The following lemma gives the rate distribution in this setting.
\begin{lem}\label{lem:ratebkhl}
(Rate Coverage with Limited Backhaul)
The rate  coverage in the setting of Sec. \ref{sec:sysmodel} and with rate model of (\ref{eq:ratebkhl}) is
\begin{equation}
\psys^{'}(\RATEthresh) = \pr(R'> \RATEthresh)=\passoc_1\psys_1^{'}(\RATEthresh) + \passoc_{\ipueindex}\psys_{\ipueindex}^{'}(\RATEthresh) + \passoc_{\opueindex }\psys_{\opueindex }^{'}(\RATEthresh) ,
\end{equation}
where 
\begin{equation}
\begin{aligned}
\psys_{1}^{'}(\RATEthresh)&= \sum_{n=1}^{\lceil\bkhl{1}/\RATEthresh-1\rceil}\pmf_1(n)\pcov_1\left(\uRATEthresh(\effres{1} n\nRATEthresh)\right),\\
\rcov_{\ipueindex}^{'}(\RATEthresh)&= \sum_{m=0}^{\lceil\bkhl{2}/\RATEthresh-2\rceil}\pmf_{\opueindex}(m)\sum_{n=1}^{\lceil\bkhl{2}/\RATEthresh-m-1\rceil}\pmf_{\ipueindex}(n)\pcov_{\ipueindex}\left(\uRATEthresh(\effres{\ipueindex} n\nRATEthresh)\right),  \\ \rcov_{\ipueindex}^{'}(\RATEthresh)&= \sum_{m=0}^{\lceil\bkhl{2}/\RATEthresh-2\rceil}\pmf_{\ipueindex}(m)\sum_{n=1}^{\lceil\bkhl{2}/\RATEthresh-m-1\rceil}\pmf_{\opueindex}(n)\pcov_{\opueindex}\left(\uRATEthresh(\effres{\opueindex} n\nRATEthresh)\right),
\end{aligned}
\end{equation}
and  $\pcov_l$ is given by Lemma \ref{lem:pcov}.
 
\end{lem}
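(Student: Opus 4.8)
The plan is to condition on which user set the typical user belongs to and then handle the backhaul cap exactly as an additional event intersected with the air-interface rate event, exploiting the fact that the backhaul constraint depends only on the load, which we have already modeled as independent of the $\SINR$.

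First I would write, using (\ref{eq:ratebkhl}) and the law of total probability over $\{u\in\mue\}$, $\{u\in\ipue\}$, $\{u\in\opue\}$,
\[
\pr(R'>\RATEthresh) = \passoc_1\,\pr(R'>\RATEthresh\mid u\in\mue) + \passoc_{\ipueindex}\,\pr(R'>\RATEthresh\mid u\in\ipue) + \passoc_{\opueindex}\,\pr(R'>\RATEthresh\mid u\in\opue),
\]
which is the claimed decomposition. It then suffices to evaluate each conditional term. Take the macro case first. Conditioned on $u\in\mue$, the serving tier is tier 1 with load $\load{1}$, so $R'=\min(R,\bkhl{1}/\load{1})$ and the event $\{R'>\RATEthresh\}$ is $\{R>\RATEthresh\}\cap\{\bkhl{1}/\load{1}>\RATEthresh\}$. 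The second event is purely a load event: $\bkhl{1}/\load{1}>\RATEthresh \iff \load{1} < \bkhl{1}/\RATEthresh \iff \load{1}\le \lceil \bkhl{1}/\RATEthresh - 1\rceil$ (using that $\load{1}$ is a positive integer). Conditioning further on $\load{1}=n$ and invoking the same load/$\SINR$ independence approximation used in the proof of Theorem \ref{thm:rcov}, $\pr(R>\RATEthresh\mid u\in\mue,\load{1}=n)=\pcov_1(\uRATEthresh(\effres{1} n\nRATEthresh))$ from (\ref{eq:ratedef}), so summing over the admissible $n$ with weights $\pmf_1(n)$ yields $\psys_1'(\RATEthresh)$ exactly as stated.

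For the two small-cell subcases the only extra wrinkle is that the backhaul cap $\bkhl{2}/\load{2}$ depends on the total tier-2 load $\load{2}=\load{B}+\load{\bar B}-1$, while the air-interface rate of a user in $\ipue$ (resp.\ $\opue$) depends only on its own sub-load $\load{\bar B}$ (resp.\ $\load{B}$). So for $u\in\ipue$ I would condition on $\load{\opueindex}=m$ (the co-located offloaded load) and $\load{\ipueindex}=n$; then $\load{2}=m+n-1$, the backhaul event becomes $\load{2}<\bkhl{2}/\RATEthresh$, i.e.\ $n\le \lceil \bkhl{2}/\RATEthresh - m - 1\rceil$, with $m$ ranging so that this bound is at least $1$, i.e.\ $m\le\lceil\bkhl{2}/\RATEthresh-2\rceil$. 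Treating $\load{B}$ and $\load{\bar B}$ as independent (each distributed as in Lemma \ref{lem:oloadpgf} with the corresponding $\passoc_l$, consistent with the approximation already adopted) and again using load/$\SINR$ independence gives the double sum for $\rcov_{\ipueindex}'$; the $\opue$ case is symmetric, exchanging the roles of $\pmf_{\ipueindex}$ and $\pmf_{\opueindex}$ and using $\pcov_{\opueindex}$. Substituting the definitions $\uRATEthresh(x)=2^x-1$ and $\nRATEthresh=\RATEthresh/\res$ from Theorem \ref{thm:rcov} completes the proof.

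The main obstacle is bookkeeping rather than anything deep: correctly translating $\min(R,\bkhl{k}/\load{k})>\RATEthresh$ into a clean truncation of the load summation (getting the ceilings and the off-by-one from the typical-user double-count right), and being explicit that for tier-2 users the cap couples the two sub-loads so that one genuinely needs the nested sum over $(\load{B},\load{\bar B})$ rather than a single sum over $\load{2}$. Everything else is inherited: the $\SINR$ coverage pieces $\pcov_l$ come from Lemma \ref{lem:pcov}, the load PMFs from Lemma \ref{lem:oloadpgf}, and the load/$\SINR$ independence approximation is exactly the one justified in the proof of Theorem \ref{thm:rcov}, so no new approximation is introduced.
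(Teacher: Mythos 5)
Your proposal is correct and follows essentially the same route as the paper's own proof: translate $\min(R,\bkhl{\tmap{l}}/\load{\tmap{l}})>\RATEthresh$ into the intersection of the air-interface rate event with the load-truncation event $\load{\tmap{l}}\leq\lceil\bkhl{\tmap{l}}/\RATEthresh-1\rceil$, use $\load{1}=\load{\tmap{1}}$ for macro users, and for tier-2 users exploit the assumed independence of $\load{\ipueindex}$ and $\load{\opueindex}$ together with $\load{2}=\load{\opueindex}+\load{\ipueindex}-1$ to obtain the nested sums, with $\pcov_l$ from Lemma \ref{lem:pcov} and the load/$\SINR$ independence approximation inherited from Theorem \ref{thm:rcov}. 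Your write-up simply makes explicit the bookkeeping (ceilings, off-by-one, coupling of the sub-loads through the backhaul cap) that the paper's proof states tersely.
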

\begin{IEEEproof}
Since the maximum rate of a user $u\in \set{U}_l$ is $\bkhl{\tmap{l}}/\load{\tmap{l}}$. Thus, for this  user to have positive rate coverage, i.e.,  $\pr(R> \RATEthresh) >0  $, a necessary condition is $\load{\tmap{l}}\leq \lceil\frac{\bkhl{\tmap{l}}}{\RATEthresh}-1\rceil$. When this necessary condition is satisfied, the rate coverage is equivalent to
\begin{align}
\rcov_{l}{'}&= \pr\left(R^{'}> \RATEthresh\right)\nonumber\\&= \pr\left(\frac{\res}{\effres{l}\load{l}}\log(1+\SINR)>\RATEthresh\cap\left\{\load{\tmap{l}}< \frac{\bkhl{\tmap{l}}}{\RATEthresh}\right\}\right)
\end{align}

Using $\load{l}=\load{\tmap{l}}$ for $l=1$, and independence of $\load{\ipueindex}$ and $\load{\opueindex}$ the conditional rate coverage are obtained.
\end{IEEEproof}
It is evident from the above Lemma that rate coverage decreases with decreasing backhaul bandwidth. Therefore, decreasing $\bkhl{2}$ will lead to decrease in the rate of the user when it is associated to small cell and thus decreasing the optimal offloading bias (this is further explored in subsequent sections).  As the backhaul bandwidth increases to infinity, Lemma \ref{lem:ratebkhl} leads to Theorem \ref{thm:rcov}, or, $\lim_{\bkhl{\tmap{l}}\to \infty} \psys_l^{'}\to\psys_l$.

\subsection{Extension to Multi-tier Downlink}\label{sec:multier}
The analysis in the previous sections discussed a two-tier setup, which can be  generalized to  a $K$-tier ($K >2$) setting. In this setting, location of the BSs of $\uth{k}$ tier are assumed according to a PPP $\tierPPP{k}$ of density $\dnsty{k}$. Further, $\bias{k}$ is assumed to be the  association bias corresponding to tier $k$, where $\bias{1} = 0$ dB and $\bias{k} \geq 0$ dB $\forall k>1$. Similar to (\ref{eq:setdef}), a user $u$ associated with tier $j$ can be classified into two disjoint sets:
\begin{equation}\label{eq:setdefk}
u \in \begin{cases}
 \set{U}_{\bar{B}j} \text{ if    $\power{j}\NDIST{j}^{-\ple{j}}>\power{k}\bias{k}\NDIST{k}^{-\ple{k}}$ } \forall k\neq j \\
 \set{U}_{{B}j} \text{  if $u \notin \set{U}_{\bar{B}j}$ and $\power{j}\bias{j}\NDIST{j}^{-\ple{j}}>\power{k}\bias{k}\NDIST{k}^{-\ple{k}}$ } \forall k\neq j.
 \end{cases}
\end{equation}
 With resource partitioning, an AP of tier $j$ schedules the offloaded users,  $\set{U}_{{B}j}$,  in $\af$ fraction of the resources, which are protected from the macro-tier interference and the non-range expanded users are scheduled on $1-\af$ fraction of the resources. Thus, the $\SINR$ of a user $u$ associated with tier $j$ is 

\small
\begin{equation}\label{eq:sinrdefk}
\SINR = \indic\left(u \in \set{U}_{{B}j}\right)\frac{\power{j}\chanl_{\ndistnsc} \ndistnsc^{-\ple{j}}}{\sum_{k=2}^K I_{y,k} + \noisepower}+ \indic\left(u \in \set{U}_{\bar{B}j}\right)\frac{\power{j}\chanl_{\ndistnsc} \ndistnsc^{-\ple{j}}}{\sum_{k=1}^K I_{y,k} + \noisepower}.
\end{equation}
\normalsize
\addtocounter{equation}{2}
By using similar techniques as in a two-tier setting, the $\SINR$ coverage for this setting is given in (\ref{eq:pcovj1})-(\ref{eq:pcovj2}). The rate is given  by
\small
\begin{equation}\label{eq:ratedefk}
\rate{} = \left\{\indic\left(u \in \set{U}_{{B}j}\right)\frac{\af}{\load{Bj}}
+ \indic\left(u \in \set{U}_{\bar{B}j}\right)\frac{1-\af}{\load{\bar{B}j}}\right\}
\res\log\left(1+\SINR\right). 
\end{equation}\normalsize 
The rate coverage for this setting  can be derived by using (\ref{eq:pcovj1})-(\ref{eq:pcovj2}) and a generalization of Lemma \ref{lem:oloadpgf}.

\subsection{Validation of Analysis}\label{sec:validation}
We verify the developed analysis, in particular Theorem \ref{thm:rcov}, Corollary \ref{cor:rcovmeanload}, and Lemma \ref{lem:ratebkhl}, in this section. The rate distribution  is validated by  sweeping over a range of rate thresholds.    The rate distribution obtained through simulation and that from Theorem \ref{thm:rcov} and Corollary \ref{cor:rcovmeanload} for two  values for the pair of  bias and resource partitioning fraction ($\bias{},\af$) is shown in Fig.~\ref{fig:ratedist_twotier_icic}. The respective densities used are $\dnsty{1} = 1$ BS/km$^2$, $\dnsty{2} = 5$ BS/km$^2$, and $\userdnsty = 100 $ users/km$^2$ with $\ple{1} =3.5$, $\ple{2} =4$. The assumed transmit powers are $\power{1} =46$ dBm and $\power{2}=26$ dBm. Thermal noise power is assumed to be $\noisepower=-10$ dBm.
The rate distribution for the case with limited backhaul  obtained through simulation and that from Lemma  \ref{lem:ratebkhl} is shown in Fig. \ref{fig:ratedist_twotierlimitedbkhl}. The rate distribution is shown for two different backhaul bandwidths for a bias  of $\bias{}=10$ dB and without resource partitioning. 
Both the plots show that the analytical results, Theorem \ref{thm:rcov} and Lemma \ref{lem:ratebkhl}, give quite accurate (close to simulation) rate distribution. Furthermore, the mean  load approximation based Corollary \ref{cor:rcovmeanload} is also not that far off from the exact curves in Fig. \ref{fig:ratedist_twotier_icic}. This gives further confidence  that the rate distribution obtained with mean load approximation in Corollary \ref{cor:rcovmeanload} can be used for further insights (as is done in the following sections).

\begin{figure*}
  \centering
\subfloat[]
{\label{fig:ratedist_twotier_icic}\includegraphics[width=\columnwidth]{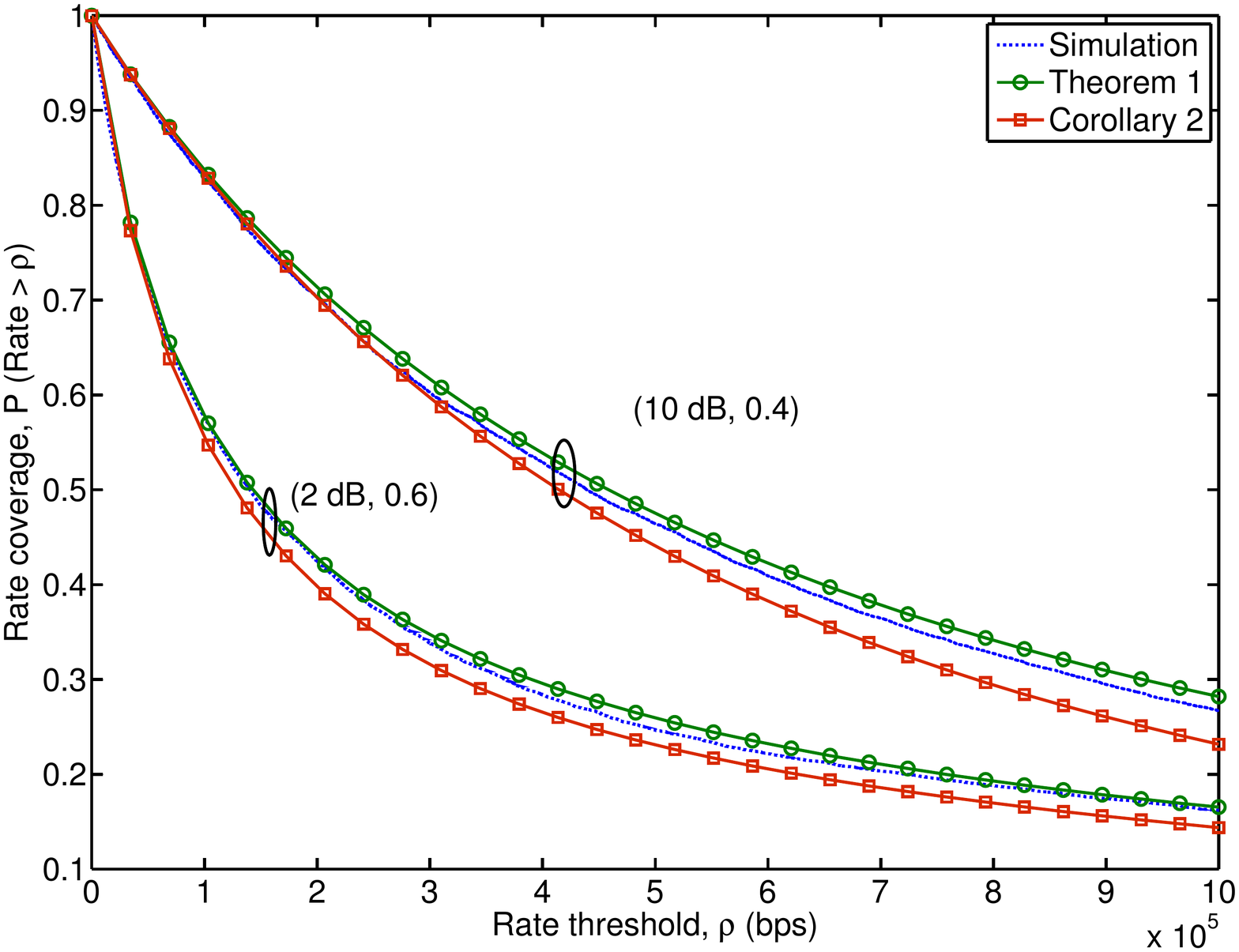}}
\subfloat[]{\label{fig:ratedist_twotierlimitedbkhl}\includegraphics[width=\columnwidth]{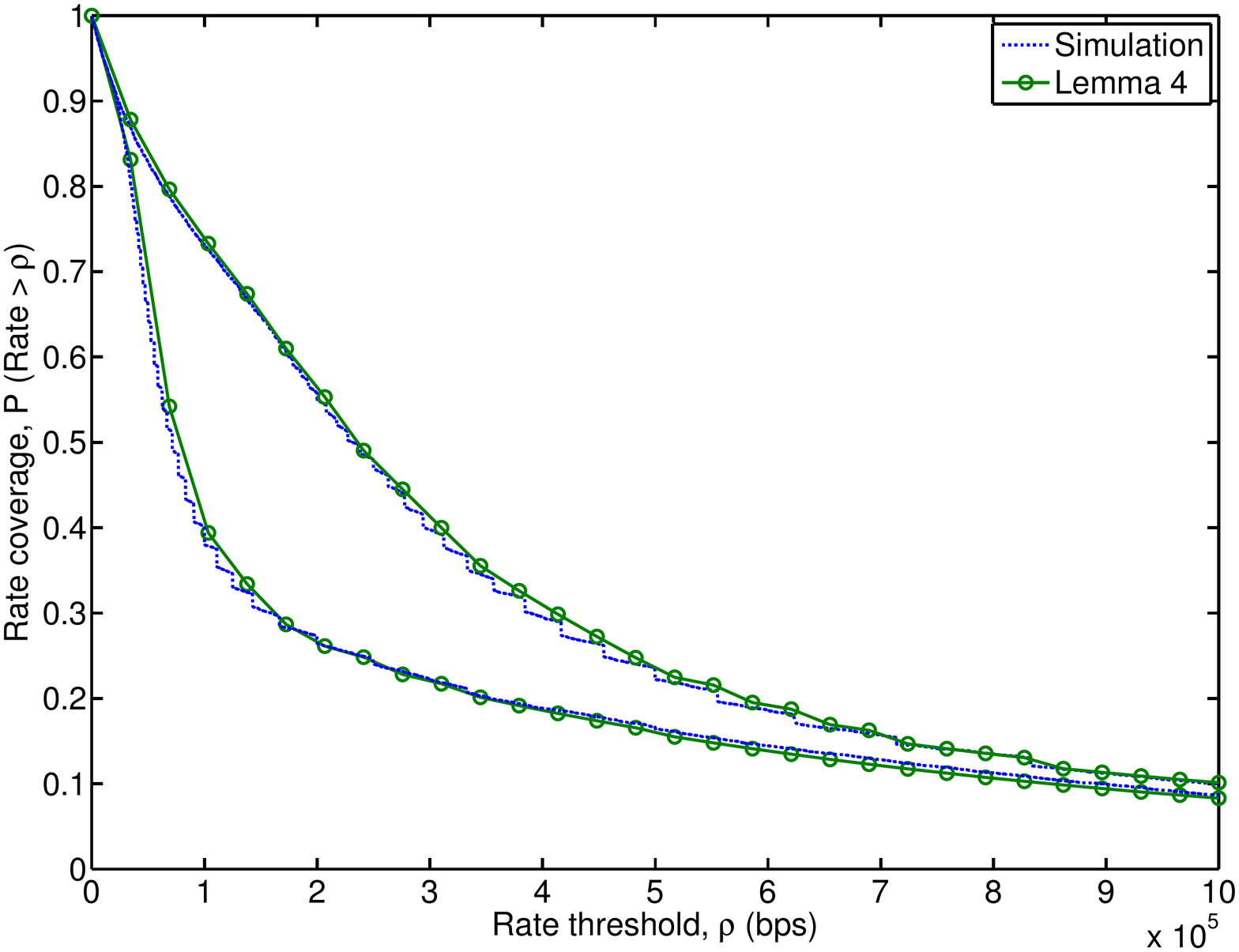}}
\caption{(a) Rate distribution obtained from simulation, Theorem \ref{thm:rcov} and Corollary \ref{cor:rcovmeanload} for $\dnsty{2}=5\dnsty{1}$, $\ple{1}=3.5$, and $\ple{2}=4$. (b) Rate distribution obtained from simulation and Lemma 4 for $\dnsty{2}=5\dnsty{1}$, $\ple{1}=3.5$, and $\ple{2}=4$.}
 \label{fig:validation}
\end{figure*}

\section{Insights on optimal $\SINR$ and Rate Coverage}\label{sec:optbias}
As it was mentioned earlier  the extent of resource partitioning and offloading needs to  be carefully chosen for optimal performance. 
Although a simplified setting is considered in the following results for analytical  insights, it is shown that these insights extend to more general settings  through numerical results. 

\subsection{$\SINR$  Coverage: Trends and Discussion}\label{sec:pcovtrend}
Although rate coverage is the main metric of interest, insights obtained from $\SINR$ coverage should be useful in explaining key trends in rate coverage. As stated before, the $\SINR$ coverage with and without resource partitioning exhibits different behavior in conjunction with offloading. The following lemma presents some key trends for $\SINR$ coverage in both settings.

\begin{cor}\label{lem:pcovcomp}
Ignoring thermal noise ($\noisepower\to 0$),  assuming equal path loss exponents and equal to four  ($\ple{k}\equiv 4$ \footnote{$\alpha$ typically  varies from $3$ to $4$ depending on the propagation environment.}), the $\SINR$  coverage without resource partitioning is  
\small
\begin{align}\label{eq:pcovwsimple}
&\pcov^w(\SINRthresh) = \frac{1}{\sqrt{\SINRthresh}\tan^{-1}(\sqrt{\SINRthresh})+1 + a\sqrt{p}(\sqrt{\SINRthresh}\tan^{-1}(\sqrt{\SINRthresh/b})+ \sqrt{b})}\nonumber\\
& + \frac{1}{\sqrt{\SINRthresh}\tan^{-1}(\sqrt{\SINRthresh})+1 + \frac{1}{a\sqrt{p}}(\sqrt{\SINRthresh}\tan^{-1}(\sqrt{b\SINRthresh})+ \sqrt{1/b})},
\end{align}
\normalsize
where $b=\frac{\bias{2}}{\bias{1}}$, $a=\frac{\dnsty{2}}{\dnsty{1}}$, and  $p=\frac{\power{2}}{\power{1}}$. The $\SINR$ coverage with resource partitioning  for the corresponding setting is 
\small
\begin{align}\label{eq:pcovsimple}
&\pcov(\SINRthresh)=  \frac{1}{\sqrt{\SINRthresh}\tan^{-1}(\sqrt{\SINRthresh})+1 + a\sqrt{p}(\sqrt{\SINRthresh}\tan^{-1}(\sqrt{\SINRthresh/b})+ \sqrt{b})}\nonumber\\& +
 \frac{1}{\sqrt{\SINRthresh}\tan^{-1}(\sqrt{\SINRthresh})+1 + \frac{1}{a\sqrt{p}}(\sqrt{\SINRthresh}\tan^{-1}(\sqrt{\SINRthresh})+ 1)} 
 \nonumber\\&  + \frac{1}{\sqrt{\SINRthresh}\tan^{-1}(\sqrt{\SINRthresh})+1 + \frac{1}{a\sqrt{pb}}} - \frac{1}{\sqrt{\SINRthresh}\tan^{-1}(\sqrt{\SINRthresh})+1 + \frac{1}{a\sqrt{p}}} .
\end{align}\normalsize
\end{cor}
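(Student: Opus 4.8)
The plan is to specialize the general $\SINR$ coverage expressions already established — equation \eqref{eq:pcov1}--\eqref{eq:pcov3} in Lemma \ref{lem:pcov} for the resource-partitioned case, and Corollary \ref{cor:pcovequalple} for the equal-path-loss-exponent case — to the regime $\ple{k}\equiv 4$ and $\noisepower\to 0$, and then evaluate the one remaining $\Q$-integral in closed form. Concretely, with $\ple{}=4$ the defining integral in $\Q(a,b,c)=c^{2/b}+a^{2/b}\int_{(c/a)^{2/b}}^{\infty}\frac{\mathrm{d}u}{1+u^{b/2}}$ becomes $\int \frac{\mathrm{d}u}{1+u^{2}}$, whose antiderivative is $\arctan(\cdot)$; this is the one genuinely ``computational'' step, and it is the reason the statement is restricted to $\ple{}=4$ rather than a general $\ple{}$.

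First I would record the noiseless, equal-exponent $\SINR$ coverage from Corollary \ref{cor:pcovequalple}, which already eliminates the numerical integral and writes $\pcov$ as a sum of four rational terms in the $\Q$-values. For the \emph{with}-partitioning case \eqref{eq:pcovsimple} this is essentially Corollary \ref{cor:pcovequalple} verbatim; for the \emph{without}-partitioning case \eqref{eq:pcovwsimple} the corresponding two-term expression is obtained by the same derivation but with $\af=0$, so that the offloaded users $\opue$ merely see the macro tier as interference exactly like $\ipue$ users (no muting), collapsing the last two terms of the four-term formula into one with $\Q(\SINRthresh,\ple{},\bias{2}/\bias{1})$ in the macro contribution — i.e. the $\SINR$ statistics of $\opue$ and $\ipue$ coincide up to the association weighting, giving the two-term form. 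Then I would substitute $\ple{}=4$ everywhere and compute, for the required arguments $c\in\{1,\nbias{1},\nbias{2}\}=\{1, 1/b, b\}$ in the normalized notation,
\[
\Q(\SINRthresh,4,c)=\sqrt{c}+\sqrt{\SINRthresh}\int_{\sqrt{c/\SINRthresh}}^{\infty}\frac{\mathrm{d}u}{1+u^{2}}
=\sqrt{c}+\sqrt{\SINRthresh}\Bigl(\tfrac{\pi}{2}-\arctan\sqrt{\tfrac{c}{\SINRthresh}}\Bigr)
=\sqrt{c}+\sqrt{\SINRthresh}\arctan\sqrt{\tfrac{\SINRthresh}{c}},
\]
using $\arctan x+\arctan(1/x)=\pi/2$ for $x>0$. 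Plugging $c=1$ gives the ubiquitous $\sqrt{\SINRthresh}\arctan(\sqrt{\SINRthresh})+1$ term; $c=b$ and $c=1/b$ give the $\sqrt{\SINRthresh}\arctan(\sqrt{\SINRthresh/b})+\sqrt{b}$ and $\sqrt{\SINRthresh}\arctan(\sqrt{b\SINRthresh})+\sqrt{1/b}$ terms. Finally I would collect the density/power prefactors $\dnsty{k}(\power{k}/\power{j})^{2/\ple{}}$, writing $a=\dnsty{2}/\dnsty{1}$, $p=\power{2}/\power{1}$, $b=\bias{2}/\bias{1}$, so that $(\power{2}/\power{1})^{1/2}=\sqrt{p}$ and the macro-conditioned term carries $a\sqrt{p}$ while the small-cell-conditioned term carries its reciprocal $1/(a\sqrt{p})$, matching \eqref{eq:pcovwsimple} and \eqref{eq:pcovsimple} term by term; the two extra terms in \eqref{eq:pcovsimple} come from the $\opue$ contribution \eqref{eq:pcov3}, where the absence of interference-free-resource analogue in $\Q$ leaves the bare $1/(a\sqrt{pb})$ and $1/(a\sqrt{p})$ denominators after the same $\ple{}=4$ evaluation (here the interferer is only the macro tier at a guaranteed-farther distance, so no $\Q$-integral appears in that cross term).

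The main obstacle is bookkeeping rather than mathematics: one must track carefully which bias/power ratios appear inside each $\Q$ versus as a bare prefactor, and in particular get the $\opue$ term right — it is the difference of two exponentials in \eqref{eq:pcov3}, reflecting $\opue = \assocr_{x_2}\setminus\assocr_{x_{\ipueindex}}$, and after setting $\noisepower\to0$ and $\ple{}=4$ each exponential integrates to a simple rational term, their difference producing the last two terms of \eqref{eq:pcovsimple} with opposite signs. A secondary subtlety is justifying the claimed two-term \emph{without}-partitioning formula: when $\af=0$ the sets $\ipue$ and $\opue$ are served on the same (un-muted) resources and the $\opue$ users' interference includes the macro tier, so combining $\passoc_{\ipueindex}\pcov_{\ipueindex}+\passoc_{\opueindex}\pcov_{\opueindex}$ under a common conditional-$\SINR$ integral (now with macro interference weighted by $\Q(\SINRthresh,4,b)$, since the biased association of $\opue$ means the macro AP is at least $b^{1/\ple{}}$-times farther in effective distance) yields exactly the single small-cell term of \eqref{eq:pcovwsimple}; this is the only place where a small argument beyond mechanical substitution is needed, and it is already implicit in the derivation of Lemma \ref{lem:pcov}.
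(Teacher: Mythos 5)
Your proposal is correct and, for the resource-partitioning formula, follows the paper's own route: specialize Corollary \ref{cor:pcovequalple} to $\noisepower\to 0$ and $\ple{k}\equiv 4$, evaluate $\Q(\SINRthresh,4,c)=\sqrt{c}+\sqrt{\SINRthresh}\tan^{-1}\sqrt{\SINRthresh/c}$ at $c\in\{1,b,1/b\}$, and substitute $a$, $p$, $b$ (your $\sqrt{c}$ term is in fact the correct evaluation; the paper's proof writes a stray ``$1+$'' there, though its final expressions agree with yours). The one place you genuinely diverge is the no-partitioning expression (\ref{eq:pcovwsimple}): the paper simply imports it from Lemma 5 of \cite{SinDhiAnd13} and simplifies, whereas you re-derive it internally by noting that without muting the users in $\ipue$ and $\opue$ obey the same conditional $\SINR$ law (macro interference included), so by the law of total probability their contributions merge into the single biased-association tier-2 term with macro interference weighted by $\frac{1}{a\sqrt{p}}\Q(\SINRthresh,4,1/b)$; this self-contained argument is sound and buys independence from the external reference, at the cost of a short extra derivation the paper avoids. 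One small slip in your parenthetical about (\ref{eq:pcov3}): for an offloaded user served on muted resources the interference comes only from the \emph{small-cell} tier (hence the $\dnsty{2}\Q(\SINRthresh,4,1)$ term), while the macro tier enters only through the association exclusion radius, which is precisely why it contributes the bare $1/(a\sqrt{pb})$ and $1/(a\sqrt{p})$ terms with no $\Q$-integral — your wording inverts the roles, but your final bookkeeping and the resulting last two terms of (\ref{eq:pcovsimple}) are consistent with the correct reading.
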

\begin{proof}
Using \[\Q(\SINRthresh,4,x)= 1 + \sqrt{\SINRthresh}\int_{\sqrt{\frac{x}{\SINRthresh}}}^\infty\frac{\mathrm{d} u}{1+u^2}= 1+ \sqrt{\SINRthresh}\tan^{-1}(\sqrt{\SINRthresh/x}),\]
in Corollary \ref{cor:pcovequalple},  and substituting $a=\frac{\dnsty{2}}{\dnsty{1}}$,   $p=\frac{\power{2}}{\power{1}}$, and $b=\frac{\bias{2}}{\bias{1}}$, the expression in  (\ref{eq:pcovsimple}) is obtained. For the case with no resource partitioning, the expression derived in Lemma 5 of  \cite{SinDhiAnd13}  can be simplified  using similar techniques  to give (\ref{eq:pcovwsimple}). 
\end{proof}
Moreover,  in this setting the following  three claims can be made:
\begin{itemize}
\item[] \textbf{Claim 1}:
 Offloading with a bias ($b>1$) leads to  suboptimal $\SINR$ coverage $\pcov^w$ in the case of no resource partitioning and $\SINRthresh\geq1$ ($0$ dB). 
\item[] \textbf{Claim 2}:
 With resource partitioning,  the bias $b$ maximizing  the $\SINR$ coverage $\pcov$ can be greater than $0$ dB, the upper bound on which,  however, decreases with increasing density of small cells.
\item[] \textbf{Claim 3}: With resource partitioning, the $\SINR$ coverage obtained by offloading all the users to small cells, i.e., $b \to \infty$,  is always less than that of no biasing, i.e., $b=1$.
\end{itemize}
\begin{proof}
See Appendix \ref{sec:proof_pcovcomp}.
\end{proof}

From the above corollary, it can  be noted that  $\pcov|_{b=1}= \pcov^w|_{b=1}$, i.e., with no biasing/offloading $\SINR$ distribution with and without resource partitioning are  equal  as the orthogonal resource is not utilized by any user. Further, with resource partitioning  the contribution to coverage from range expanded users (third term in  (\ref{eq:pcovsimple})) increases with increasing  bias, whereas the  corresponding contribution from the macro cell users (first term in (\ref{eq:pcovsimple})) decreases with  increasing bias. This is a bit counter-intuitive as one would expect with increasing bias only very good geometry users remain with macro cell. The  reasoning behind this is  the \textit{ fast}  decrease in the fraction of such users $\passoc_{1}$ with increasing bias, which leads to an overall decrease  in the coverage contribution from macro cell users.  Similarly, the corresponding \textit{fast} increase in  the fraction of offloaded users $\passoc_{\opueindex }$ leads to an overall increase in the their contribution to coverage. A similar trend is observed for coverage without resource partitioning  in (\ref{eq:pcovwsimple}).

An intuitive explanation for the  claims is as follows. Claim 1  states that without resource partitioning, proactively offloading a user to small cell through an association bias is suboptimal, as the user would then always be  associated  to an AP offering lower $\SINR$.  On the other hand, with resource partitioning certain fraction of  users can be offloaded to  small cells and served on the resources which are protected  from macro cell interference. In this case,  increasing the small cell density, however, increases the interference on the orthogonal resources for the offloaded users, and hence, the optimal offloading bias is forced downward. Claim 2 justifies the described intuition when the bound is tight.  Claim 3  suggests that preventing only offloaded users from macro tier interference is clearly suboptimal when almost all the macro cell  users are offloaded to small cells.  Of course, with large association bias, $b\to \infty$, it would be better to shut the macro tier completely off, i.e., $\af=1$.

The above discussion is corroborated by the results in Fig. \ref{fig:pcov_trend_icic}, which shows the effect of association bias on $\SINR$ coverage with varying density of small cells  for a setting with $\ple{1}=3.5$, $\ple{2} =4$, and $\SINRthresh = 0.5$ ($-3$ dB).  Without resource partitioning,  it can be seen that any bias is suboptimal whereas  for the case with resource partitioning  optimal $\SINR$ coverage decreases with increasing density and  the optimal bias also decreases. In case of no resource partitioning, increase in coverage with density is observed due to higher path loss exponents of small cells  (this was also observed in \cite{josanxiaand12}). Thus, as evident from these results,  the above claims hold in general settings too.
\begin{figure}
	\centering
		\includegraphics[width=\columnwidth]{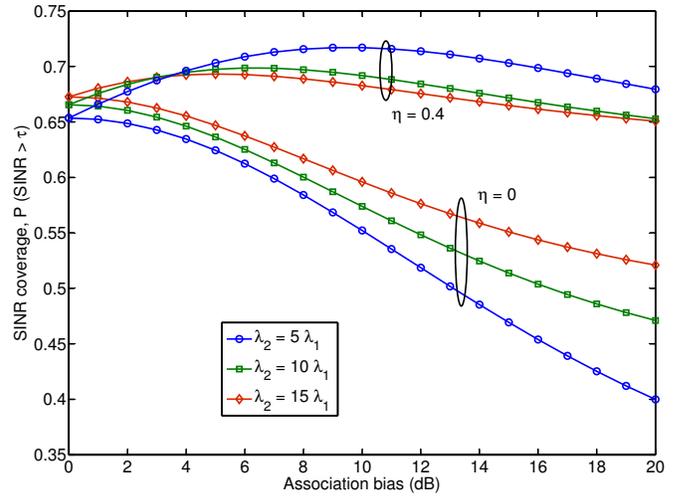}
		\caption{Effect of small cell density on SINR coverage, with and without resource partitioning, as association bias is varied.}
	\label{fig:pcov_trend_icic}
	\end{figure}
%
\subsection{Rate Coverage: Trends and Discussion}\label{sec:rcovtrend}
The following Corollary provides the rate coverage expressions for a simplified setting,   which  is used for drawing the following insights. 
\begin{cor}\label{cor:rcovcomp}
Ignoring thermal noise ($\noisepower\to 0$),  assuming equal path loss exponents and equal to four  ($\ple{k}\equiv 4$),  the rate coverage without resource partition and with the mean load approximation  is  \small
\begin{align}\label{eq:rcovwsimple}
\bar{\rcov}^w &= \frac{1}{\sqrt{u_1}\tan^{-1}(\sqrt{u_1})+1 + a\sqrt{p}(\sqrt{u_1}\tan^{-1}(\sqrt{u_1/b})+ \sqrt{b})} \nonumber \\
&+ \frac{1}{\sqrt{u_2}\tan^{-1}(\sqrt{u_2})+1 + \frac{1}{a\sqrt{p}}(\tan^{-1}(\sqrt{b u_2})+ \sqrt{1/b})}.
\end{align}\normalsize
where  $u_k = \uRATEthresh(\nRATEthresh\avload{k})$, $b=\frac{\bias{2}}{\bias{1}}$, $a=\frac{\dnsty{2}}{\dnsty{1}}$, and  $p=\frac{\power{2}}{\power{1}}$.
The rate coverage with resource partitioning  under the  corresponding assumptions  is 
\small
\begin{align}\label{eq:rcovsimple}
&\bar{\rcov}=  \frac{1}{\sqrt{u_1}\tan^{-1}(\sqrt{u_1})+1 + a\sqrt{p}(\sqrt{u_1}\tan^{-1}(\sqrt{u_1/b})+ \sqrt{b})} \nonumber\\&+ 
 \frac{1}{\sqrt{u_{\ipueindex}}\tan^{-1}(\sqrt{u_{\ipueindex}})+1+ \frac{1}{a\sqrt{p}}(\tan^{-1}(\sqrt{u_{\ipueindex}})+ 1)} 
 \nonumber\\&  + \frac{1}{\sqrt{u_{\opueindex }}\tan^{-1}(\sqrt{u_{\opueindex }})+1 + \frac{1}{a\sqrt{pb}}} \nonumber\\& - \frac{1}{\sqrt{u_{\opueindex }}\tan^{-1}(\sqrt{u_{\opueindex }})+1 + \frac{1}{a\sqrt{p}}}, 
\end{align}
\normalsize
where $u_l =\uRATEthresh(\nRATEthresh\avload{l}\effres{l})$.
\end{cor}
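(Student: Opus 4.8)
The plan is to derive Corollary \ref{cor:rcovcomp} as a direct specialization of Corollary \ref{cor:rcovmeanload} (the mean load approximation), exactly mirroring how Corollary \ref{lem:pcovcomp} was obtained from Lemma \ref{lem:pcov}. First I would start from the rate-coverage expressions \eqref{eq:mrcov1}--\eqref{eq:mrcov3} for $\avpsys_1$, $\avpsys_{\ipueindex}$, $\avpsys_{\opueindex}$, set $\noisepower \to 0$ so that every $\SNR_k(y)$ term disappears from the exponentials, and set $\ple{k} \equiv 4$ so that all the $\nple{k} = 1$ and $2/\ple{k} = 1/2$. With equal path loss exponents the integrals over $\ndistnsc$ become elementary: each is of the form $\int_0^\infty \ndistnsc \exp(-c\,\ndistnsc^2)\,\mathrm{d}\ndistnsc = 1/(2c)$, and combined with the prefactor $2\pi\dnsty{k}/\passoc_l$ this collapses $\avpsys_l$ to a ratio of densities over the appropriate sum of $\Q$-weighted densities — precisely as in Corollary \ref{cor:pcovequalple}, but now with the $\SINR$ threshold $\SINRthresh$ replaced by the load-scaled rate threshold $\uRATEthresh(\nRATEthresh\avload{l}\effres{l})$, i.e. the quantity called $u_l$.

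Next I would substitute the closed form $\Q(x,4,c) = 1 + \sqrt{x}\tan^{-1}(\sqrt{x/c})$, which is already recorded in the proof of Corollary \ref{lem:pcovcomp}, into the collapsed expressions, and then rename $a = \dnsty{2}/\dnsty{1}$, $p = \power{2}/\power{1}$, $b = \bias{2}/\bias{1}$. For $\avpsys_1$ the relevant $\Q$ arguments are $\Q(u_1,4,\nbias{k})$, giving $\sqrt{u_1}\tan^{-1}(\sqrt{u_1})+1$ from the serving (macro) tier and $a\sqrt{p}\,(\sqrt{u_1}\tan^{-1}(\sqrt{u_1/b}) + \sqrt{b})$ from the small-cell tier, reproducing the first term of \eqref{eq:rcovsimple}. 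For $\avpsys_{\ipueindex}$ the biases are all unity so one gets $\sqrt{u_{\ipueindex}}\tan^{-1}(\sqrt{u_{\ipueindex}})+1$ plus $\frac{1}{a\sqrt p}(\tan^{-1}(\sqrt{u_{\ipueindex}}) + 1)$ — note that $u_{\ipueindex}^{2/\ple{}} = u_{\ipueindex}^{1/2}$ times $\tan^{-1}$ gives a bare $\tan^{-1}(\sqrt{u_{\ipueindex}})$ once one simplifies, matching the second term. For $\avpsys_{\opueindex}$, the difference-of-exponentials structure in \eqref{eq:mrcov3} produces, after the Gaussian integral, a difference of two reciprocals: one with the macro interference term weighted by $(\npower{1}\nbias{1})^{2/\ple{1}} = 1/(p b)^{1/2}$ wait — more precisely, since we are conditioned on tier 2 serving, $\npower{1} = \power{1}/\power{2} = 1/p$ and $\nbias{1} = \bias{1}/\bias{2} = 1/b$, so $(\npower{1}\nbias{1})^{2/\ple{1}} = 1/\sqrt{pb}$ and $(\npower{1})^{2/\ple{1}} = 1/\sqrt{p}$, yielding exactly the third and fourth terms of \eqref{eq:rcovsimple} with the $\sqrt{u_{\opueindex}}\tan^{-1}(\sqrt{u_{\opueindex}}) + 1$ self-interference contribution from $\Q(u_{\opueindex},4,1)$. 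Finally I would invoke Corollary \ref{cor:rcovmeanload} one more time to recall $\avload{l} = 1 + 1.28\,\userdnsty\passoc_l/\dnsty{\tmap{l}}$ and $\uRATEthresh(x) = 2^x - 1$, so that $u_l = \uRATEthresh(\nRATEthresh\avload{l}\effres{l})$ is fully determined, and assemble $\avpsys = \passoc_1\avpsys_1 + \passoc_{\ipueindex}\avpsys_{\ipueindex} + \passoc_{\opueindex}\avpsys_{\opueindex}$, noting the $\passoc_l$ prefactors cancel against the $1/\passoc_l$ already inside each $\avpsys_l$. The no-partitioning case \eqref{eq:rcovwsimple} is obtained identically but starting from the $\af \to 0$ (equivalently, $\effres{l} = 1$, no third/fourth offloaded-user terms but instead the two-set split of \cite{SinDhiAnd13}), exactly as the proof of Corollary \ref{lem:pcovcomp} did for the $\SINR$ case.

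The main obstacle — really the only non-bookkeeping point — is verifying that the numerical-integration step genuinely collapses, i.e. that under $\noisepower \to 0$ and $\ple{k}\equiv\ple{}$ the rate-coverage integrands in \eqref{eq:mrcov1}--\eqref{eq:mrcov3} have exactly the same functional form in $\ndistnsc$ as the $\SINR$-coverage integrands in \eqref{eq:pcov1}--\eqref{eq:pcov3} with $\SINRthresh$ swapped for $u_l$. This is immediate once one observes that the mean-load approximation in Corollary \ref{cor:rcovmeanload} was built precisely by the substitution $\pcov_l(\SINRthresh) \mapsto \pcov_l(\uRATEthresh(\nRATEthresh\avload{l}\effres{l}))$, so no new integral identity is needed; the remaining work is purely algebraic simplification of the $\Q(\cdot,4,\cdot)$ values and careful tracking of the normalized quantities $\npower{1} = 1/p$, $\nbias{1} = 1/b$ in the conditioned-on-tier-2 expressions. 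I would present the proof as a single short paragraph: "Setting $\noisepower\to 0$ and $\ple{k}\equiv 4$ in \eqref{eq:mrcov1}--\eqref{eq:mrcov3}, the distance integrals are Gaussian and evaluate in closed form; substituting $\Q(x,4,c) = 1 + \sqrt{x}\tan^{-1}(\sqrt{x/c})$ and the normalized parameters $a,p,b$ yields \eqref{eq:rcovsimple}; the no-partitioning expression \eqref{eq:rcovwsimple} follows identically from the $\af\to 0$ specialization, cf. the proof of Corollary \ref{lem:pcovcomp}."
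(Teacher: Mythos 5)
Your proposal is correct and is essentially the paper's own proof, which is exactly this one-liner: specialize Corollary \ref{cor:rcovmeanload} with $\noisepower\to 0$ and $\ple{k}\equiv 4$, evaluate the now-Gaussian distance integrals, and apply the same $\Q(x,4,c)=1+\sqrt{x}\tan^{-1}(\sqrt{x/c})$ substitution and normalized-parameter bookkeeping as in the proof of Corollary \ref{lem:pcovcomp}, with the no-partitioning expression taken from Corollary 2 of \cite{SinDhiAnd13}. One small caution: your remark that the cross-tier term simplifies to a \emph{bare} $\tan^{-1}(\sqrt{u_{\ipueindex}})$ is not a real algebraic step --- the computation yields $\sqrt{u_{\ipueindex}}\,\tan^{-1}(\sqrt{u_{\ipueindex}})$ there, exactly as in the second term of (\ref{eq:pcovsimple}), so the absent $\sqrt{u}$ factors in (\ref{eq:rcovwsimple})--(\ref{eq:rcovsimple}) are best regarded as typographical in the statement rather than something your derivation should (or could) produce.
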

\begin{proof}
For the case with resource partitioning, the rate coverage expression   follows from Corollary \ref{cor:rcovmeanload}   using similar techniques as in the proof of Corollary \ref{lem:pcovcomp}. Without any resource partitioning Corollary 2 of \cite{SinDhiAnd13} is used. 
\end{proof}
From the above expressions it can be observed   that $\bar{\rcov}|_{\af=0, b=1}= \bar{\rcov}^w|_{b=1}$, i.e.,  the  rate distribution  for both scenarios is same  when  no orthogonal resource is made available and there are no offloaded users. 
For the case with no resource partitioning, the contribution to rate coverage from macro cell users (first term of  (\ref{eq:rcovwsimple})) increases initially with increasing bias  as the number of users sharing  the  radio resources at each macro BS decrease. But, beyond a certain  association bias, due to the decreasing fraction of macro cell users, the overall contribution of the corresponding term  towards rate coverage decreases.  Similar trend is shown by the contribution from  small cell users (second term in (\ref{eq:rcovwsimple})). The initial increase with bias is due to the increasing fraction of small cell users and the subsequent decrease is due to increased number of users sharing the  radio resources.  This behavior of rate coverage could be seen as an intuitive reasoning behind the  existence of an optimal bias. 

With resource partitioning,  decreasing  $\af$ increases the rate coverage of macro cell users and small cell users (first two terms of (\ref{eq:rcovsimple}) respectively),  whereas that of range expanded users decreases (last two terms of (\ref{eq:rcovsimple})), due to the decrease in available radio resources. With increasing bias, the rate coverage contribution from  small cell users remains invariant (second term in (\ref{eq:rcovsimple})), as the set $\ipue$ is independent of association bias.  The contribution to rate coverage  from the macro cell users (first term in (\ref{eq:rcovsimple})) and that  from offloaded users (sum of third and fourth term in (\ref{eq:rcovsimple}))  show similar variation with association bias as in the case of  no resource partitioning. Therefore, there should exist an optimal bias for each $\af$ in this setting too. 

The discussion in the above paragraphs is extended further in the following sections where the impact of various  factors is studied on optimal offloading.  For the following results, the parameters   used are the same as in Section \ref{sec:validation} with rate threshold  $\RATEthresh=250$ Kbps wherever applicable.

\subsubsection{Impact of resource partitioning}
 The effect of association bias and resource partitioning  fraction on rate coverage, as obtained from Theorem \ref{thm:rcov}, is shown in Fig. \ref{fig:rcov_trend_bias}. The optimal  pair for this setting  is  $\bias{}= 15$ dB and $\af=0.47$ (obtained by two-fold search), which  gives a  significant increase in the rate coverage compared to the case with  no resource partitioning and offloading ($\bias{}=0$ dB, $\af=0$). With increasing resource partitioning  fraction, the optimal association bias increases as more resources (macro interference free) become available for offloaded users. The variation of rate coverage with resource partitioning fraction for different association  biases is  shown in Fig. \ref{fig:rcov_trend_activity}. The optimal resource partitioning fraction increases with increase in association bias as more resources are needed to serve the increasing number of offloaded users.  As shown, at lower association bias lower resource partitioning fraction  is better as there are not \textit{enough} range expanded users to take advantage of  the resources obtained from  muting the macro tier.  

 A  trend similar to rate coverage can  also be seen in the $5^\mathrm{th}$ percentile rate $\RATEthresh_{95}$ (where $\psys(\RATEthresh_{95}) =0.95$, i.e., fifth percentile of the population receives rate less than 
$\RATEthresh_{95}$) in Fig. \ref{fig:fiveprcntilerate_twotier_icic}.  The corresponding effect on median rate is shown in Fig. \ref{fig:fiftyprcntilerate_twotier_icic}. The optimal  pair of ($\bias{}$,$\af$) for these two metrics is same as that in  rate coverage result. This shows that a single choice of the operating region provides a network-wide optimal performance across different metrics. 
	\begin{figure}
	\centering
		\includegraphics[width=\columnwidth]{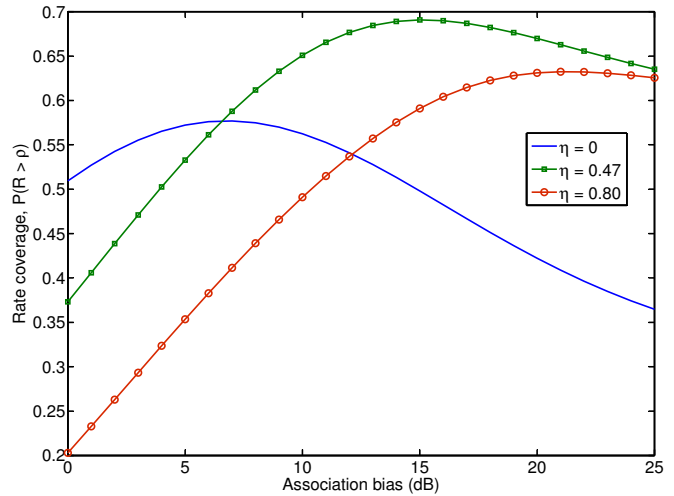}
		\caption{Effect of  association bias,  $\bias{}$, on rate coverage with $\dnsty{2}=5\dnsty{1}$.}
	\label{fig:rcov_trend_bias}
\end{figure}

\begin{figure}
	\centering
		\includegraphics[width=\columnwidth]{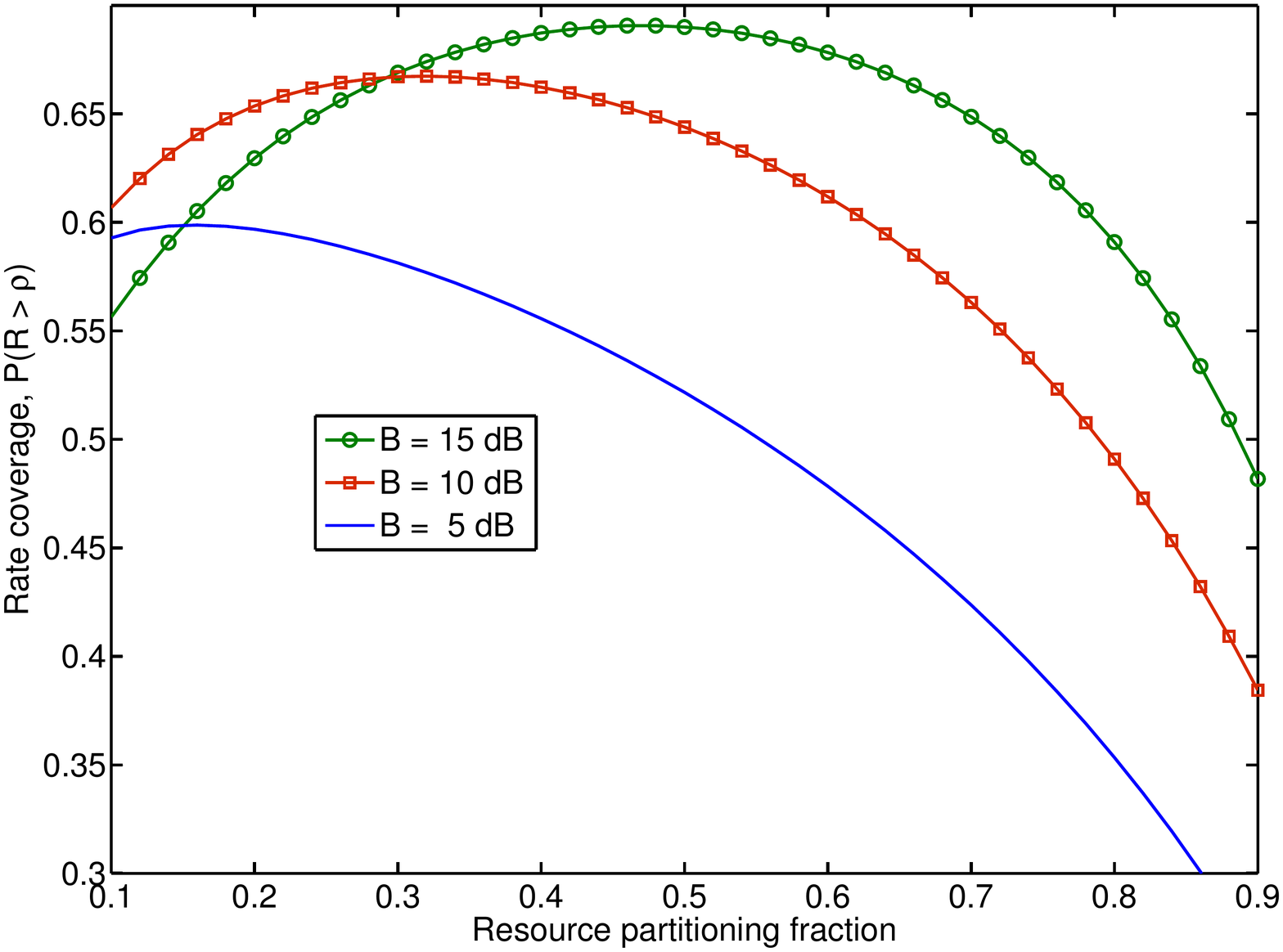}
		\caption{Effect of resource partitioning  fraction, $\af$, on rate coverage with $\dnsty{2}=5\dnsty{1}$.}
	\label{fig:rcov_trend_activity}
\end{figure}

\subsubsection{Impact of infrastructure density}
The impact of density of small cells on the fifth percentile rate is shown in Fig. \ref{fig:fiveprcntilerate_trend_density}. It can be observed that at any particular association bias, as small cell density increases, $\RATEthresh_{95}$ also increases because of the decrease in  load  at  each  AP.  With no resource partitioning, $\eta=0$, the  optimal  bias is seen to be invariant (at $5$ dB) to  the small cell density. 
Similar trend was also  observed in \cite{ye2012user} through exhaustive simulations. However, with  resource partitioning, $\eta > 0$,  optimal association bias decreases with increasing small cell density. The optimal resource partitioning fractions (also shown for each density value) decrease with increasing small cell density.  These observations regarding the behavior of bias and resource partitioning fraction with small cell density can be explained by re-highlighting  the learning from Sec. \ref{sec:pcovtrend} about optimal bias for $\SINR$ coverage. Without any resource partitioning, the optimal bias for $\SINR$ coverage is  $0$ dB and  independent of small cell density and similar independence  is seen for rate coverage where the optimal bias is $5$ dB. The insight of strictly suboptimal performance by a positive bias from $\SINR$, though is  clearly not valid for rate.   With resource partitioning,  increasing small cell density decreased the $\SINR$ coverage  due to the increased interference in the orthogonal time/frequency resources allocated to range expanded users. Similar decrease of   optimal association bias with increasing small cell density is seen for rate for same reasons. The increased interference in the orthogonal resources also leads to the decrease in the optimal fraction of such resources, $\af$. 

 It  is worth pointing here that, although, the optimal association bias decreases with increasing small cell density, the optimal traffic offload fraction $\passoc_2$ increases. With increasing density, at the same traffic offload fraction, the load per AP decreases, increasing the affinity of users for the corresponding tier.   This trend is shown in Fig. \ref{fig:bias_trafficfraction_trend} for two cases -- (1) infinite backhaul bandwidth and (2) limited small cell backhaul bandwidth  $\bkhl{2}=5 $ Mbps. Decreasing  backhaul bandwidth lowers both the optimal bias and consequently optimal offloading fraction at a fixed density.  Interestingly, with increasing density, the optimal bias for case (2) increases, in contrast to the trend in case (1). This is because of the decreasing load per AP and the, thus, diminishing performance gap between the limited and infinite backhaul setting. Therefore, beyond a certain density the gap almost  vanishes  and  the corresponding optimal biases reduce with increasing density.

\begin{figure}
	\centering
		\includegraphics[width=\columnwidth]{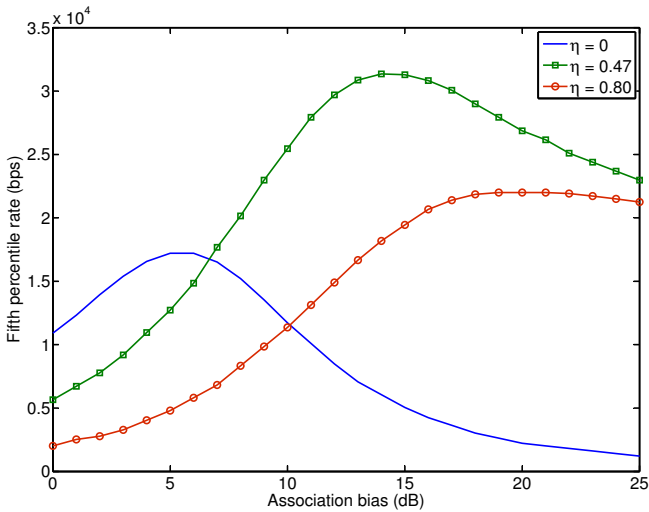}
		\caption{Effect of association bias and resource partitioning fraction $(\bias{}, \af)$ on fifth percentile rate.}
	\label{fig:fiveprcntilerate_twotier_icic}
\end{figure}

%
	\begin{figure}
	\centering
		\includegraphics[width=\columnwidth]{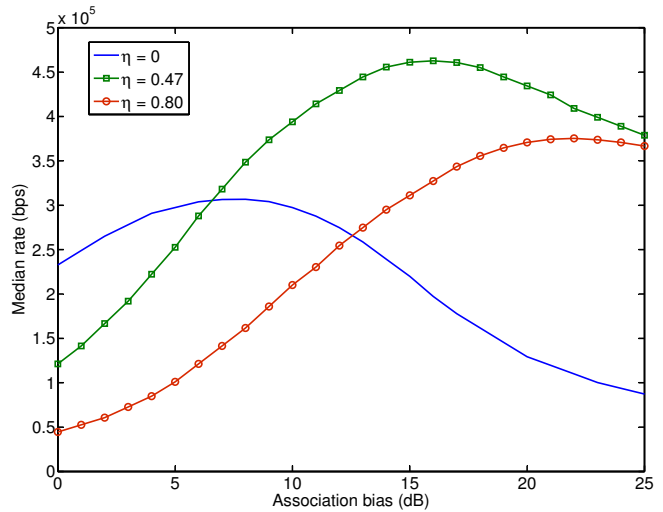}
		\caption{Effect of association bias and resource partitioning  fraction  $( \bias{}, \af)$ on median rate.}
	\label{fig:fiftyprcntilerate_twotier_icic}
\end{figure}


	\begin{figure}
	\centering
		\includegraphics[width=\columnwidth]{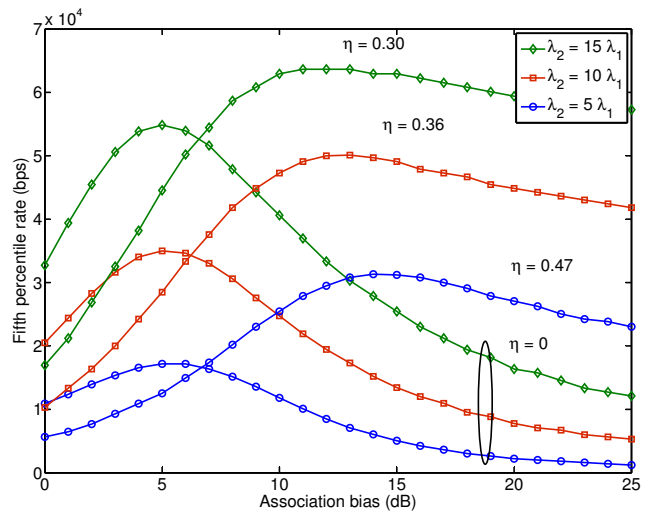}
		\caption{Variation in fifth percentile rate with association bias and resource partitioning  fraction $(\bias{}, \af)$ for different small cell densities.}
	\label{fig:fiveprcntilerate_trend_density}
\end{figure}

\begin{figure}
	\centering
		\includegraphics[width=\columnwidth]{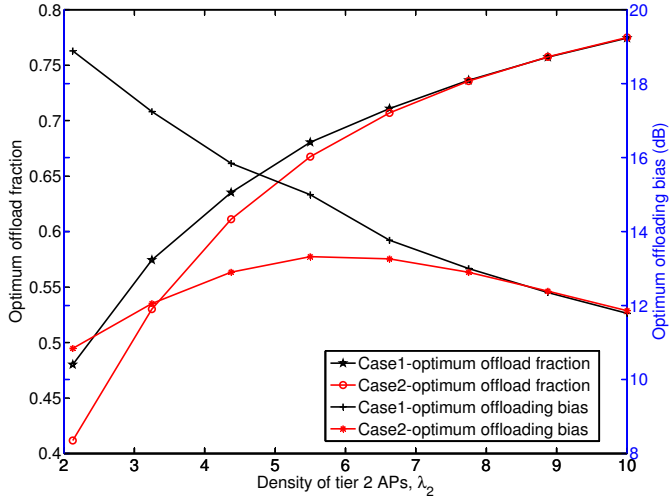}
		\caption{ Effect  of  backhaul bandwidth and small cell density on  the  optimum  association  bias  and  optimum traffic offload fraction.}
	\label{fig:bias_trafficfraction_trend}
\end{figure}
\section{Conclusion}\label{sec:conclusion}
We have provided an analytical framework for offloading and resource partitioning in  co-channel  heterogeneous networks. With the heterogeneity  in base stations becoming prominent in emerging wireless networks these two techniques are set to play an important role in radio resource management. To the best of our knowledge, this is first work to derive rate distribution in a heterogeneous cellular network, while incorporating resource partitioning and limited  bandwidth backhauls. The availability  of a functional form for rate as a function of system parameters opens a plethora of avenues to  gain design insights. Using the developed analysis, the importance of combining load balancing with  resource partitioning was clearly established. It was further shown that the rate is a key metric for studying these techniques and insights based on just $\SINR$ are  inconclusive.  Investigating the impact of offloading on  rate distribution in the uplink of HetNets could be an interesting future research direction.
\appendices
\section{}\label{sec:proofassocpr}
\begin{IEEEproof}[Proof of Lemma \ref{lem:aspr}]
Using the definition of the three disjoint sets, the respective association probabilities are
\begin{align*}\label{eq:passoc1}
 \passoc_{1} & = \pr\left(\power{1}\NDIST{1}^{-\ple{1}}>\power{2}\nbias{2}\NDIST{2}^{-\ple{2}}\right)\\&= \int_{z>0}\pr\left(\NDIST{2}>(\npower{2}\nbias{2})^{1/\ple{2}}\ndistns^{1/\nple{2}}\right)f_{\NDIST{1}}(\ndistns)\mathrm{d} \ndistns
 \end{align*}
\small
\begin{equation}
\begin{aligned}\label{eq:passoc1}
  \passoc_{\ipueindex} & = \pr\left(\power{2}\NDIST{2}^{-\ple{2}}>\power{1}\NDIST{1}^{-\ple{1}}\right)\\&=
\int_{z>0}\pr\left(\NDIST{1}>(\npower{1})^{1/\ple{1}}\ndistns^{1/\nple{1}}\right)f_{\NDIST{2}}(\ndistns)\mathrm{d} \ndistns \\
\displaystyle\passoc_{\opueindex }&=\pr\left(\power{2}\bias{2}\NDIST{2}^{-\ple{2}}>\power{1}\NDIST{1}^{-\ple{1}}\bigcap\power{2}\NDIST{2}^{-\ple{2}}<\power{1}\NDIST{1}^{-\ple{1}}\right)\\
\hspace{-0.05in}=\int_{z>0}\pr&\left(\left(\frac{\npower{1}}{\bias{2}}\right)^{1/\ple{1}}\ndistns^{1/\nple{1}}\leq\NDIST{1}<(\npower{1})^{1/\ple{1}}\ndistns^{1/\nple{1}}\right)f_{\NDIST{2}}(\ndistns)\mathrm{d} \ndistns.
\end{aligned} 
\end{equation}
\normalsize
Now
\begin{equation}\label{eq:nullpr}
\pr\left(\NDIST{k} >\ndistns \right) =\pr\left(\PPP{k} \cap b(0,\ndistns) = \emptyset \right) = \exp\left(-\pi \dnsty{k}\ndistns^2\right),
\end{equation}
 where $b(0,z)$ is the Euclidean ball of radius $\ndistns$ centered at origin.
The probability distribution function (PDF) $f_{\NDIST{k}}(\ndistns)$ can then be written as
\small
\begin{equation}\label{eq:pdfndist}
f_{\NDIST{k}}(\ndistns) = \frac{\mathrm{d}}{\mathrm{d}\ndistns}\{1- \pr(\NDIST{k}>\ndistns)\} = 2\pi\dnsty{k}\ndistns\exp(-\pi\dnsty{k}\ndistns^2),\,\, \forall \ndistns \geq 0.
\end{equation}
\normalsize
Using  (\ref{eq:nullpr}) and (\ref{eq:pdfndist}) in (\ref{eq:passoc1}) gives Lemma \ref{lem:aspr}.
\end{IEEEproof}

\section{}\label{sec:proofoloadpgf}
\begin{IEEEproof}[Proof of Lemma \ref{lem:oloadpgf}]
 Assuming that the  palm inversion formula  \cite{moller}, which relates the area of Poisson Voronoi (PV) containing the origin to that of a typical PV, holds for  multiplicatively weighted PV \footnote{Formal proof is out of the scope of this paper and will be a dealt in a future publication.}, the distribution of the association area  of the tagged AP of tier-$k$, $\area_{k}^{'}$  can be written as 
\begin{equation}
f_{\area_{k}^{'}}(\areans) \propto \areans f_{\area_{k}}(\areans).
\end{equation}
For a typical user $u \in \set{U}_l$, the load at the tagged AP $\load{l}$ comprises of the typical user and other users $\oload{l}$ (say) associated with the AP. Using Remark \ref{rem:associatioarea},  
the PMF of  load 
\begin{multline}
\pmf_{l}(n) =\pr\left(\load{l}=n\right)= \int_{\areans>0}\exp(\userdnsty \areans)\frac{(\userdnsty \areans)^{(n-1)}}{(n-1)!}f_{\area_{l}^{'}}(\areans)\mathrm{d\areans}\\
=\frac{3.5^{3.5}}{(n-1)!}\frac{\Gamma(n+3.5)}{\Gamma(3.5)}\left(\frac{\userdnsty\passoc_{l}}{\dnsty{\tmap{l}}}\right)^{n-1}\left(3.5 + \frac{\userdnsty\passoc_{l}}{\dnsty{\tmap{l}}}\right)^{-(n+3.5)} \\n\geq 1.
\end{multline}
A similar approach was taken for single-tier setting in \cite{YuKim13}  and for  multi-tier setting in \cite{SinDhiAnd13}.
\end{IEEEproof}


\section{}\label{sec:proofpcov}
\begin{IEEEproof}[Proof of Lemma \ref{lem:pcov}]
In this proof we first derive the distribution of the distance between the typical user $u$ and the tagged AP  when $u\in \set{U}_l$. Let $\NDISTc{l}$ denote this distance, then 
\begin{align}
\pr(\NDISTc{l}>\ndistnsc) & 
=\pr\left(\NDIST{\tmap{l}}>\ndistnsc| u \in \set{U}_l \right)=\frac{\pr\left(\NDIST{\tmap{l}}>\ndistnsc, u \in \set{U}_l\right)}{\pr\left(u \in \set{U}_l\right)}.\label{eq:derivndist1}
\end{align}
Using the proof of Lemma \ref{lem:aspr} in Appendix \ref{sec:proofassocpr}, the corresponding PDFs are 
\begin{equation}
\begin{aligned}\label{eq:ndist4}
f_{\NDISTc{1}}(\ndistnsc) &= \frac{2\pi\dnsty{1}}{\passoc_{1}} \ndistnsc\exp\left(-\pi\sum_{k=1}^{2}\dnsty{k}(\npower{k}\nbias{k})^{2/\ple{k}}\ndistnsc^{2/\nple{k}}\right)\\
f_{\NDISTc{\ipueindex}}(\ndistnsc)&  = \frac{2\pi\dnsty{2}}{\passoc_{\ipueindex}} \ndistnsc\exp\left(-\pi\sum_{k=1}^{2}\dnsty{k}(\npower{k})^{2/\ple{k}}\ndistnsc^{2/\nple{k}}\right)\\
f_{\NDISTc{\opueindex }}(\ndistnsc) & = \frac{2\pi\dnsty{2}}{\passoc_{\opueindex }} \ndistnsc\exp\left(-\pi\sum_{k=1}^{2}\dnsty{k}\npower{k}^{2/\ple{k}}\ndistnsc^{2/\nple{k}}\right)\\&\left\{\exp\left(-\pi\sum_{k=1}^{2}\dnsty{k}\npower{k}^{2/\ple{k}}\ndistnsc^{2/\nple{k}}(\nbias{k}^{2/\ple{k}}-1)\right)-1\right\}.
\end{aligned}
\end{equation}
Conditioned on serving AP being $s_l$, the  Laplace transform  of interference can be expressed as the Laplace functional of $\tierPPP{k}$
\begin{align}
\mgf_{I_{y,k}}(s)&=\cexpect{\exp(-sI_{y,k})}{I_{y,k}}\\&=\expect{\exp\left(-s\power{k}\sum_{x\in \tierPPP{k} \setminus s_l} \ap_{xk} \chanl_{x} x^{-\ple{k}}\right)} \\
&\overset{(a)}{=} \cexpect{\prod_{x\in\tierPPP{k}\setminus s_l}\mgf_{\chanl_x}\left(s\power{k}x^{-\ple{k}}\right)}{\tierPPP{k}}\\
&\overset{(b)}{=}\exp\left(-2\pi\dnsty{k}\int_{\ndist{kl}(y)}^{\infty}\left\{1-\mgf_{\chanl_x}\left(s\power{k}x^{-\ple{k}}\right)\right\}x\mathrm{d}x\right)\\
&\overset{(c)}{=}\exp\left(-2\pi\dnsty{k}\int_{\ndist{kl}(y)}^{\infty}\frac{x}{1+(s\power{k})^{-1} x^{\ple{k}}}\mathrm{d}x\right),
\end{align}
where (a) follows from the independence of $\chanl_x$, (b) is obtained using the PGFL  \cite{mecke_book} of $\tierPPP{k}$, and (c) follows by using the MGF of an exponential RV with unit  mean.
In the above expression,  $\ndist{kl}(y)$ is the lower bound on distance of the closest interferer in $\uth{k}$ tier, which can be obtained by using (\ref{eq:association}) as
\begin{equation}
\begin{aligned}
\text{if }& l=1: \ndist{21}(y)=(\npower{2}\nbias{2})^{1/\ple{2}}\ndistnsc^{\ple{1}/\ple{2}}, \ndist{11}(y)= \ndistnsc\\
\text{if }& l=\ipueindex: \ndist{1\ipueindex}(y)=(\npower{1})^{1/\ple{2}}\ndistnsc^{\ple{1}/\ple{2}}, \ndist{2\ipueindex}(y) = \ndistnsc\\
\text{if }& l=\opueindex : \ndist{2\opueindex}(y) =\ndistnsc
\end{aligned}
\end{equation}
Using change of variables  with $t= (s\power{k})^{-2/\ple{k}}x^2$, the integral can be simplified as
\small
\begin{align}
\int_{\ndist{kl}(y)}^{\infty}\frac{2x}{1+(s\power{k})^{-1} x^{\ple{k}}}\mathrm{d}x &= (s\power{k})^{2/\ple{k}}\int_{(s\power{k})^{-2/\ple{k}}\ndist{kl}(y)^2}^{\infty}\frac{\mathrm{d}t}{1+t^{\ple{k}/2}}\nonumber\\
&= (s\power{k})^{2/\ple{k}}\Z\left(1,\ple{k},\frac{\ndist{kl}(y)^\ple{k}}{s\power{k}}\right),\end{align}
\normalsize
giving the  Laplace transform of interference 
\begin{equation}\label{eq:mgfinterference}
\mgf_{I_{y,k}}(s)= \exp\left(-\pi\dnsty{k}(s\power{k})^{2/\ple{k}}\Z\left(1,\ple{k},\frac{\ndist{kl}(y)^\ple{k}}{s\power{k}}\right)\right),
\end{equation}
where \[\Z(a,b,c)= a^{2/b}\int_{(\frac{c}{a})^{2/b}}^\infty \frac{\mathrm{d} u}{ 1+ u^{b/2}}.\]
The $\SINR$ coverage of  user $u\in\uset_l$ is
\begin{equation}\label{eq:pcovproof0}
\pcov_{l}(\SINRthresh) = \int_{\ndistnsc\geq0} \pr(\SINR > \SINRthresh|u\in \uset_l,\NDISTc{l}=\ndistnsc) f_{\NDISTc{l}}(\ndistnsc) \mathrm{d} \ndistnsc.
\end{equation}
Using  the $\SINR$ expression in (\ref{eq:sinrdef})
\begin{align}
 &\pr(\SINR>\SINRthresh|u \in \mue,\NDISTc{l}=\ndistnsc)=\pr\left(\frac{\power{1}\chanl_{\ndistnsc} \ndistnsc^{-\ple{1}}}{\sum_{k=1}^2 I_{y,k} + \noisepower}>\SINRthresh\right)\\&= \pr\left(\chanl_\ndistnsc > \ndistnsc^{\ple{1}}{\power{1}}^{-1}\SINRthresh\left\{\sum_{k=1}^2 I_{y,k} +\noisepower \right\}\right)\\
&=\expect{\exp\left(-\ndistnsc^{\ple{1}}\SINRthresh\power{1}^{-1}\left\{\sum_{k=1}^2I_{y,k} + \noisepower\right\}\right)}\\
&\overset{(a)}{=}\exp\left(-\frac{\SINRthresh}{\SNR_{1}(\ndistnsc)}\right)\prod_{k=1}^2 \cexpect{\exp\left(-\ndistnsc^{\ple{1}}\SINRthresh\power{1}^{-1}I_{y,k}\right)}{I_{y,k}}\\
&=\exp\left(-\frac{\SINRthresh}{\SNR_{1}(\ndistnsc)}\right)\prod_{k=1}^2\mgf_{I_{y,k}}\left(\ndistnsc^{\ple{1}}\SINRthresh\power{1}^{-1}\right) \label{eq:pcovproof1},
\end{align}
where $\SNR_{1}(\ndistnsc) = \frac{\power{1}\ndistnsc^{-\ple{1}}}{\noisepower}$ and (a) follows from the independence of $I_{y,k}$. Similarly 
\begin{multline}
 \pr(\SINR>\SINRthresh | u\in\ipue, \NDISTc{l}=\ndistnsc)\\=\exp\left(-\frac{\SINRthresh}{\SNR_{2}(\ndistnsc)}\right)\prod_{k=1}^2\mgf_{I_{y,k}}\left(\ndistnsc^{\ple{2}}\SINRthresh\power{2}^{-1}\right) \label{eq:pcovproof2},
\end{multline}
and 
\begin{multline}
 \pr(\SINR(\ndistnsc)>\SINRthresh| u \in \opue,\NDISTc{l}=\ndistnsc)\\=\exp\left(-\frac{\SINRthresh}{\SNR_{2}(\ndistnsc)}\right)\mgf_{I_{y,2}}\left(\ndistnsc^{\ple{2}}\SINRthresh\power{2}^{-1}\right). \label{eq:pcovproof3}
\end{multline}
Using the PDF distribution  (\ref{eq:ndist4}) in  (\ref{eq:pcovproof0}) along with  (\ref{eq:pcovproof1})-(\ref{eq:pcovproof3})   and (\ref{eq:mgfinterference}), the $\SINR$ coverage expressions given in Lemma \ref{lem:pcov} are obtained.
The overall $\SINR$ coverage of a typical user is then obtained  using the law of total probability to get
$ \pcov(\SINRthresh) = \sum_{l}\pcov_{l}(\SINRthresh) \passoc_{l}. $
\end{IEEEproof}

\section{}\label{sec:proof_pcovcomp}
\begin{IEEEproof}[Proof of claims]
\textbf{Claim 1:}
The partial derivative of $\pcov^w$ with respect to offloading bias $b$ is 
\small
\begin{multline*}
 -a\sqrt{p}\frac{-\frac{\SINRthresh}{b+{\SINRthresh}}\frac{1}{2\sqrt{b}} + \frac{1}{2\sqrt{b}}}{\left\{\sqrt{\SINRthresh}\tan^{-1}(\sqrt{\SINRthresh})+1 + a\sqrt{p}(\sqrt{\SINRthresh}\tan^{-1}(\sqrt{\SINRthresh/b})+ \sqrt{b})\right\}^2}  \\
 - \frac{1}{a\sqrt{p}}\frac{\frac{\SINRthresh}{1+\SINRthresh b} - \frac{1}{2b^{3/2}}}{\left\{\sqrt{\SINRthresh}\tan^{-1}(\sqrt{\SINRthresh})+1 + \frac{1}{a\sqrt{p}}(\sqrt{\SINRthresh}\tan^{-1}(\sqrt{b\SINRthresh})+ \sqrt{1/b})\right\}^2}.
\end{multline*}
\normalsize
Since $b, \SINRthresh\geq 0$, hence $-\frac{\SINRthresh}{b+{\SINRthresh}}\frac{1}{2\sqrt{b}} + \frac{1}{2\sqrt{b}} \geq 0$. Also, if \begin{align}
&\SINRthresh \geq 1 \implies \SINRthresh \geq 1/b \text{ for } b\geq 1\implies \SINRthresh  \geq \frac{1}{2b^{3/2}-b}\\
&\implies \frac{\SINRthresh}{1+\SINRthresh b} - \frac{1}{2b^{3/2}} \geq 0 \implies \nabla_b\pcov^w \leq 0.
\end{align}
Thus, for $\SINRthresh \geq 1$  the $\SINR$ coverage decreases for all $b \geq 1 $.\\
\textbf{Claim 2:}
Approximating $\tan^{-1}(a)\approx a$ and substituting $x$ for $\sqrt{b}$, the partial derivative of coverage with respect to $x$ is 
\begin{align}
\nabla_x\pcov = & \nabla_x\Bigg\{\frac{1}{\sqrt{\SINRthresh}\tan^{-1}(\sqrt{\SINRthresh})+1 + a\sqrt{p}(\frac{{\SINRthresh}}{x}+ x)} \nonumber\\& +
 \frac{1}{\sqrt{\SINRthresh}\tan^{-1}(\sqrt{\SINRthresh})+1 + \frac{1}{a\sqrt{p}x}}\Bigg\} \\
= & \frac{a\sqrt{p}(\frac{\SINRthresh}{x^2}-1)}{\left\{v + a\sqrt{p}(\frac{{\SINRthresh}}{x}+ x)\right\}^2}+ \frac{1}{a\sqrt{p}x^2(v+ \frac{1}{a\sqrt{p}x})^2},
\end{align}
 where $v \triangleq \sqrt{\SINRthresh}\tan^{-1}(\sqrt{\SINRthresh})+1$.
The roots of the equation $\nabla_x\pcov =0$ are the zeros of the polynomial 
\begin{multline}
P(x) = x^4 a^2p(v^2-1) + 2x^3a\sqrt{p}v(1-\SINRthresh)\\ -x^2\left\{v^2-1 + a^2p\SINRthresh(v^2+2)\right\}- 4xa\sqrt{p}v\SINRthresh-a^2p\SINRthresh^2-\SINRthresh.
\end{multline}
Since $v>1$, using the Descartes sign rule the polynomial $P(x)$ has $1$ positive root and upto $3$ negative roots. 
The value of the positive root can be be upper bounded \cite{Stefanescujucs05} by 
\small
\begin{multline}
U  = \max\Bigg\{\left[3\frac{ v^2-1 + a^2p\SINRthresh(v^2+2)}{a^2p(v^2-1)}\right]^{1/2},\\\left[3\frac{4a\sqrt{p}v}{a^2p(v^2-1)}\right]^{1/3}, \left[3\frac{a^2p\SINRthresh^2+\SINRthresh}{a^2p(v^2-1)}\right]^{1/4}\Bigg\} \text{ if }\SINRthresh\leq1 \end{multline}\begin{multline}
U  = \max\Bigg\{\left[4\frac{2a\sqrt{p}v(\SINRthresh-1)}{a^2p(v^2-1)}\right],\left[4\frac{ v^2-1 + a^2p\SINRthresh(v^2+2)}{a^2p(v^2-1)}\right]^{1/2},\\\left[4\frac{4a\sqrt{p}v}{a^2p(v^2-1)}\right]^{1/3}, \left[4\frac{a^2p\SINRthresh^2+\SINRthresh}{a^2p(v^2-1)}\right]^{1/4}\Bigg\} \text{ if }\SINRthresh>1.
\end{multline}
\normalsize
 Further, the upper bound on the positive roots of $P(-x)$ is given by 
 \small
\begin{multline}
L  = \max\Bigg\{\left[2\frac{ v^2-1 + a^2p\SINRthresh(v^2+2)}{a^2p(v^2-1)}\right]^{1/2},\\ \left[2\frac{a^2p\SINRthresh^2+\SINRthresh}{a^2p(v^2-1)}\right]^{1/4}\Bigg\} \text{ if }\SINRthresh>1
\end{multline}
\begin{multline}
L  = \max\Bigg\{\left[3\frac{2a\sqrt{p}v(1-\SINRthresh)}{a^2p(v^2-1)}\right],\left[3\frac{ v^2-1 + a^2p\SINRthresh(v^2+2)}{a^2p(v^2-1)}\right]^{1/2},\\ \left[3\frac{a^2p\SINRthresh^2+\SINRthresh}{a^2p(v^2-1)}\right]^{1/4}\Bigg\} \text{ if }\SINRthresh\leq1.
\end{multline}
\normalsize
Note that $-L$ is the lower bound on the negative roots of $P(x)$, since they are same as the positive roots of $P(-x)$. Clearly, both  $U$ and $L$ are  inversely proportional to the density of small cells $a$. Since
\begin{align}
-L\leq \sqrt{b} \leq U \implies b \leq \max\left\{U^2, L^2\right\},
\end{align}
therefore the upper bound on optimal bias is inversely proportional to the density of small cells $a$.\\
\textbf{Claim 3:} The $\SINR$ coverage at very large offloading bias is 
\begin{align}
&\pcov|_{b=\infty} \triangleq \lim_{b\to\infty} \pcov \\&=
 \frac{1}{v + \frac{v}{a\sqrt{p}}} + \frac{1}{v}-\frac{1}{v+\frac{1}{a\sqrt{p}}},
\end{align}
where $v \triangleq \sqrt{\SINRthresh}\tan^{-1}(\sqrt{\SINRthresh})+1$. With the knowledge that $\pcov|_{b=1} = \pcov^w|_{b=1}$ we get 
\begin{align*}
\pcov|_{b=1} - \pcov|_{b=\infty} = & \frac{1}{v+av} -\frac{1}{v} + \frac{1}{v+\frac{1}{a}}
= \frac{v^2 -v }{v(v+\frac{1}{a})(v+av)}\nonumber\\
& > 0 \text{ since } v >1. 
\end{align*}
Thus, $\pcov|_{b=\infty} < \pcov|_{b=1}$.
\end{IEEEproof}

\bibliographystyle{ieeetr}
\bibliography{IEEEabrv,/Users/singh/Dropbox/research/refoffload}


\end{document}